\documentclass[lettersize,journal]{IEEEtran}
\usepackage{amsmath,amsfonts}
\usepackage{algorithmic}
\usepackage{array}
\usepackage{textcomp}
\usepackage{stfloats}
\usepackage{url}
\usepackage{verbatim}
\usepackage{graphicx}
\usepackage{caption}
\usepackage{subcaption}
\usepackage{float}
\usepackage{color,soul}

\usepackage{amsmath}
\usepackage{amsthm}
\usepackage{algorithm}


\usepackage[utf8]{inputenc} 
\usepackage[T1]{fontenc}    
\usepackage{url}            
\usepackage{booktabs}       
\usepackage{amsfonts}       
\usepackage{nicefrac}       
\usepackage{tcolorbox}
\usepackage{microtype}      
\usepackage{amsmath}
\usepackage{amsthm}
\usepackage{graphicx}
\usepackage{subcaption}
\usepackage{algorithm}
\usepackage{algorithmic}
\usepackage{wrapfig}

\usepackage{xr}

\newtheorem{Theorem}{Theorem}
\newtheorem{Proposition}{Proposition}
\newtheorem{Lemma}{Lemma}
\newtheorem{Corollary}{Corollary}
\newtheorem{Definition}{Definition}

\usepackage{amsmath,amsfonts}
\usepackage{algorithmic}
\usepackage{array}
\usepackage{textcomp}
\usepackage{stfloats}
\usepackage{url}
\usepackage{verbatim}
\usepackage{graphicx}
\hyphenation{op-tical net-works semi-conduc-tor IEEE-Xplore}
\def\BibTeX{{\rm B\kern-.05em{\sc i\kern-.025em b}\kern-.08em
    T\kern-.1667em\lower.7ex\hbox{E}\kern-.125emX}}
\usepackage{balance}
\begin{document}
\title{A Theory of I/O-Efficient Sparse Neural Network Inference}
\author{Niels Gleinig, Tal Ben-Nun, Torsten Hoefler\\
\{niels.gleinig, talbn, torsten.hoefler\}@inf.ethz.ch\\
ETH Zürich
}

\markboth{}
{A Theory of I/O-Efficient Sparse Neural Network Inference}

\maketitle

\begin{abstract}
As the accuracy of machine learning models increases at a fast rate, so does their demand for energy and compute resources. On a low level, the major part of these resources is consumed by data movement between different memory units. Modern hardware architectures contain a form of \textit{fast memory} (e.g., cache, registers), which is small, and a \textit{slow memory} (e.g., DRAM), which is larger but expensive to access. We can only process data that is stored in fast memory, which incurs data movement (input/output-operations, or \textit{I/Os}) between the two units. In this paper, we provide a rigorous theoretical analysis of the I/Os needed in sparse feedforward neural network (FFNN) inference. We establish bounds that determine the optimal number of I/Os up to a factor of $2$ and present a method that uses a number of I/Os within that range. Much of the \textit{I/O-complexity} is determined by a few high-level properties of the FFNN (number of inputs, outputs, neurons, and connections), but if we want to get closer to the exact lower bound, the instance-specific sparsity patterns need to be considered. Departing from the 2-optimal computation strategy, we show how to reduce the number of I/Os further with simulated annealing. Complementing this result, we provide an algorithm that constructively generates networks with maximum I/O-efficiency for inference. We test the algorithms and empirically  verify our theoretical and algorithmic contributions. In our experiments on real hardware we observe speedups of up to 45$\times$ relative to the standard way of performing inference.
\end{abstract}

\begin{IEEEkeywords}
Neural Network Inference, Sparse Neural Networks, I/O-Complexity, Simulated Annealing.
\end{IEEEkeywords}

\section{Introduction}

Almost all modern computing systems deploy different memory units: Some are fast but small and some are large but slow. Data can only be processed in fast memory. As the fast memory is very limited in size, data needs to be moved between fast and slow memory over the course of a computation. These data movements are called \textbf{I/Os} (abbreviating \textit{input/output-operations}). They are very expensive in time and energy. For example, on an NVIDIA V100 GPU, loading values from global memory costs \textit{up to 514.5$\times$ more clock cycles} than one fused-multiply-add (FMA) operation between two 32-bit floating point values (or 14$\times$ if stored on an L1 cache)~\cite{jia2018dissecting}.


%
In deep learning, and scientific computing in general, I/Os account for the major part of time and energy expenses~\cite{padal}. Hence, it is important to use I/Os efficiently by performing computations in a way that makes use of ``data locality'' and ``data reuse''. We need to schedule the computational steps in such a way that accesses to the same data are close together, sparing eviction of values from fast memory (e.g., cache) and reading them again.
Therefore, optimizing the implementation of DNN inference~\cite{szesurvey} and training~\cite{bennunsurvey} often requires tailor-made implementations for the available memory architectures in order to reach high utilization.

In this paper, we analyze and optimize I/Os with a theoretical model that applies across the wide variety of hardware in which this phenomenon occurs.
We consider exclusively sparse Feed-Forward Neural Networks (FFNNs) without shared connections. This model contains, but is not limited to, pruned Multi-Layer Perceptrons (MLP). We assume that the FFNN is given as a list of weighted edges in a Directed Acyclic Graph (DAG) together with one additional value for each vertex (being the input-value for input-neurons and the bias for non-input neurons). 
In this setting, the primary way to obtain I/O-efficiency is to ``use'' the connections in an efficient order. That is, when we use a connection, the value of the input neuron and the value of the partial sum of the output neuron should ideally already be in fast memory. Therefore, we need to find an order of the connections, in which the connections that have neurons in common, are close together.

Finding an order which is optimal with respect to this property is a combinatorial problem. We will show that if we use a traditional computational order, which corresponds to a natural interpretation of FFNN inference (for example, seeing it as a sequence of matrix vector multiplications) we can be far from optimal: The number of read-I/Os could be up to twice the optimal number, and the number of write-I/Os could differ from the optimal number by an arbitrarily large factor.

Computations in ML have some special features that we can use when aiming for I/O-efficiency. Unlike the research on I/O-efficient computing outside of ML, there are more degrees of freedom that we can exploit. For example, there are several acceptable solutions to a given problem: Two different FFNNs may produce different outputs, yet both can have a satisfying accuracy. It thus makes sense to adapt the neural architecture to the hardware. On the other hand, there is additional freedom on the side of the hardware, as it is not uncommon to search or even build hardware for specific ML applications~\cite{eie,tpu,szesurvey}.

Hence, our investigation is directed towards answering the following questions: (1) \emph{If we are given an FFNN and a fast memory of a given size, what is the \textbf{minimal number of I/Os} needed to perform inference on this FFNN?} (2) \emph{For a given FFNN, what is the \textbf{smallest memory size} with which we can perform inference with a minimal number of I/Os (i.e., we can avoid having to read and write intermediate results due to lack of memory)?} and (3)
\emph{For a fast memory of a given size, what are the \textbf{FFNN architectures} on which we can perform inference with a minimal number of I/Os?}

\subsection{Background and related work}

\paragraph{I/O-complexity}
There is an extensive line of theoretical research on I/O-efficient algorithms outside of machine learning.  
Hong \& Kung's seminal work~\cite{redblue} is among the first to investigate I/O-efficient algorithms formally. This work introduced a theoretical model called the \textit{red-blue pebble game} and used this model to show how standard matrix multiplication or FFT should be scheduled to use an asymptotically minimal number of I/Os. This model has been extended and altered in various ways. These extensions have been used to establish lower bounds and matching upper bounds for problems such as sorting, permutation, and matrix transposition~\cite{Aggar}; as well as sparse matrix dense vector multiplication~\cite{SPMDV}, and various graph algorithms~\cite{Muna}. However, these solutions are typically only optimal up to constant multiplicative factors and often these factors are large or not exactly known. In fact, it has been shown that finding exactly optimal solutions to the red-blue pebble game (corresponding to computation schedules with an exact minimal number of data movements) is a PSPACE-complete problem \cite{hardapprox}. Furthermore, this problem has been hypothesized to be even hard to approximate \cite{hardapprox}, since inapproximability is known for related pebble games~\cite{Liu1,Liu2}.

There are also bounds on the I/O-complexity of general computations with large proportions of inputs~\cite{redbluetrees}. These bounds apply to the problem considered in our paper. However, they are less fine-grained (they do not distinguish between read- and write-I/Os as we do) and generally less tight as the ones that we establish in this paper.

In ML-specific algorithms, I/O-efficiency has been practically identified as a crucial performance factor. Performance modeling of DNNs is usually derived from operation counts~\cite{paleo} and sums of fixed values obtained from benchmarking individual layers~\cite{chaos,oyama16}.
As for I/O complexity of DNN evaluation, the only theoretical work apart from this one, to the best of our knowledge, is given by Demmel \& Dinh~\cite{DemmelConv}, who provide communication bounds for convolutional and pooling layers.

\paragraph{Network pruning and sparsity}
DNN pruning (i.e., removing certain edges or nodes from a network while preserving accuracy) has been extensively studied over the last decades \cite{OBD,frankle,sparsesurvey}.
Pruning is of much practical interest for edge computing devices, as it allows to drastically reduce the memory footprint and number of floating point operations. Dense (fully-connected) layers have been a prime target for pruning, as they tend to account for a major proportion of the memory size. For example, in AlexNet~\cite{alexnet}, the final dense layers contain $90\%$ of the parameters.
Also transformers~\cite{attention}, which are currently very popular and achieve state-of-the-art results in NLP tasks, include FFNNs that comprise a large proportion of the overall size. For example, BERT~\cite{attention} includes several FFNNs of depth $2$ and weight matrices of dimensions $1024\times 4096$ and $4096\times 1024$ and the major part of BERTs parameters and compute time for inference come from these FFNNs~\cite{ivanov2020data}.

There are different ways to prune networks, e.g., by a given weight threshold~\cite{strom2015scalable,han2015deep}, the top-$k$ values~\cite{Aji_2017,sparcml}, or by way of gradual weight elimination~\cite{OBD}. Regardless of the method, pruning dense layers leads to sparse and unstructured networks, for which data-locality is more difficult to obtain.

\paragraph{Hardware Architectures}
There has been work on hardware-based optimizations for inference, especially for the class of sparse FFNNs, where Sze et al.~\cite{szesurvey} provide a detailed survey. Specific approaches include EDEN~\cite{EDEN}, which uses approximate DRAM for improving energy-efficiency; Bhardwaj et al.~\cite{infdist}, who consider communication in inference in a distributed setting; and EIE~\cite{eie} and Eyeriss~\cite{Sze} runs inference on a compressed CNN directly, using accelerated sparse matrix-vector multiplications and run-length encoding, respectively. While these approaches improve certain aspects of DNN processing, as the network size grows, such architectures ultimately also resort to multi-level memory hierarchies, which our approach addresses.

\subsection{Contributions}
Our main theoretical contributions are a collection of theorems and propositions that analyze the I/Os in FFNN inference. Theorem \ref{oinft} provides bounds on the number of I/Os that are needed to perform inference on a given FFNN-architecture. In Proposition \ref{tight}, we show that these bounds are optimal in the sense that none of them could be tightened by multiplying with any factor other than $1$. The lower and upper bounds on the total number of I/Os differ by a multiplicative factor of at most $2$. The proof of Theorem \ref{oinft} is constructive and shows how a computation can be done to use a number of I/Os within this 2-optimal range.
Departing from 2-optimality, our goal is to get closer to the \emph{exact} optimum by reordering the connections beneficially. We approach this problem with Simulated Annealing.
In Section \ref{genFNNs} we present a method for generating FFNNs, which according to Theorem \ref{compactGrowth} completely characterizes the architectures on which we can perform inference with a minimal number of I/Os for a given memory size (i.e., exactly matching the lower bound). Hence, this method can be used as a powerful tool to co-design neural networks and hardware. As a corollary of this theorem we obtain an upper bound on the smallest possible memory size that allows us to perform inference with maximal I/O-efficiency (that is, without having to write or read any temporary values). We test the algorithms on random FFNNs across a wide range of parameter settings (density, depth, width, memory sizes).

\section{Model and notations} \label{model}

We assume that initially all FFNN parameters and all input values are laid out in slow memory. These data are exactly the union of the following three types of data:
\begin{itemize}
\item The weighted connections.
\item One bias value for each non-input neuron.
\item One input value for each input neuron.
\end{itemize}
Each connection is described by an independent parameter triple $(i,j,w_{ij})$, where $i$ is the input neuron, $j$ the output neuron, and $w_{ij}$ the weight (there is no reuse or sharing of weights).
We denote the number of connections as $W$, the number of neurons $N$, the number of input neurons $I$, and the number of output neurons $S$.  
In our theoretical analysis, we assume the connections (the entire triples describing them), the numerical values at the neurons (biases, partial sums, outputs of the activation functions), and all other numerical values involved in the computation are data types of the same size.
Hence, the entire problem size is given by the total number of weights, biases, and input neurons: $W+(N-I)+I=W+ N$.
We have a fast memory that can hold at any given time a number of values (of this data type) that is given by the parameter $M$, where we assume $M\geq3$. In this model the single parameter $M$ defines the whole architecture: fast memory of size $M$ and slow memory of unlimited size. We can perform arbitrary computations ``for free'' on data that is stored in fast memory, including application of activation functions to those values. In particular, if a connection, the value of its input neuron, and the partial sum of its output neuron (which we consider initially the bias) are stored in fast memory, then we can add the product of the input value and the connection weight to the partial sum of the output neuron (and we assume that no additional memory is required for this operation). If we want to do a computation that requires a value that is currently only stored in slow memory, we first need to move this value from slow to fast memory. This movement counts as 1 \textbf{read-I/O}. Furthermore, if our memory is full, we first need to free up space before we can read new values. To do this, we can delete values. However, if we need a value again in future computation steps or if it is the value of an output neuron, we have to write the value to slow memory before deleting it. Deletions are for free whereas each write operation counts as 1 \textbf{write-I/O}. Initially, all $N+W$ parameters are stored in slow memory. The number of I/Os of a computation is the sum of read-I/Os and write-I/Os.
The goal is to perform inference computation, namely compute and store (i.e., write back to slow memory) the value of all output-neurons, with as few I/Os as possible.
We formally define the optimum that we aim for:

\begin{Definition}
For a given FFNN $\mathcal{N}$ and fast memory size $M$, we let \textbf{I/Os}$(\mathcal{N},M)\in \mathbb{N}$ denote the minimum number of overall I/Os that are needed to perform inference on $\mathcal{N}$ with a memory size $M$, where the minimum is taken over all possible computation strategies (i.e., sequences of computation-, read-, and write- steps, that solve this problem). Let \textbf{rI/Os}$(\mathcal{N},M)\in \mathbb{N}$ denote the minimum number of read-I/Os and \textbf{wI/Os}$(\mathcal{N},M)\in \mathbb{N}$ denote the minimum number of write-I/Os needed. \end{Definition}

\subsection{How can we describe inference-computations in this model?}

We introduce the concept of \emph{eviction policy}, and show that a computation corresponds to an \emph{eviction policy} together with a topological order of the connections.
An \textbf{eviction policy} is a set of rules or instructions that specify how we evict data from fast memory. That is, when the fast memory is full and we need to evict a value to create free space, the eviction policy determines which of the values to evict. \textbf{LRU} (least-recently-used) is the eviction policy defined by always evicting the value that has been used least recently among all values in fast memory. \textbf{RR} (round-robin) is the eviction policy defined by having a pointer specifying the value to be evicted next and moving this pointer one place to the right whenever we evict a value (and moving the pointer again to the first place of the memory when we reach the end). The eviction policy \textbf{MIN} (also known as Belady's optimal replacement algorithm) is defined by always evicting the value that will be referenced farthest in the future (if there are values that will not be used again, any of those is evicted). It has been shown that \textbf{MIN} uses the minimal number of I/Os for a given sequence of computation steps \cite{MINcache}. Notice that while it is difficult to implement \textbf{MIN} eviction policies for general computations, in the case of FFNN inference it is trivial to implement it offline once we fixed a topological order in which we process the weights (always evict the activations adjacent to weights farthest away in the given topological order).

We assume that an efficient eviction policy is provided. That is, when we evict a value that is either (1) already stored in slow memory (for example, when we read the value of a computed neuron to use it for its outgoing connections, but do not change it), or (2) a value that we will not need again in the future (a computed non-output-neuron has been used already for computing all neurons that depend on it), we simply delete it from fast memory, without spending a write-I/O.

Now, let $ e_1,e_2,\ldots , e_W$ be a topological order of the connections of the neural network (that is, whenever $e_i$ and $e_j$ are connections for which the output-neuron of $e_i$ is the input-neuron of $e_j$, we have $i<j$). Together with an eviction policy, a topological order of the connections gives rise to an inference-computation in a natural way, shown by Algorithm \ref{alginf}.

\begin{algorithm}[H]
\label{alginf}
\begin{algorithmic}[1]
\STATE \textbf{Input}: A topological order of the connections $e_1,\ldots,e_W$
\STATE \textbf{Output}: Values of the output neurons
\FOR{i = 1 \textbf{to} $W$}
\STATE Read the connection $e_i=(a,b,w)$;
\IF{value of the input neuron, $n_a$, is not in fast memory}
 \STATE Read $n_a$ (possibly, first evicting one value, if fast memory is full);
\ENDIF
\IF{partial sum of the output neuron, $n_b$, is not in fast memory}
\STATE Read $n_b$ (possibly, first evicting one value, if fast memory is full);
\ENDIF
\STATE Update $n_b=n_b+w\cdot n_a$

\IF{there is no connection after $e_i$ with output-neuron $n_b$}
    \STATE Apply the activation function: $n_b=f(n_b)$;
\ENDIF
    
\ENDFOR
\end{algorithmic}
\caption{Inference Algorithm}\label{alginf}
\end{algorithm}

Notice that this pseudocode generalizes all ``standard ways'' of performing inference. For example, matrix-vector-multiplication based inference would correspond to orders that start with all connections from the first layer, followed by the connections of the second layer and so on. Yet, it also allows us to perform inference on FFNN-architectures given by any possible DAG (including those with very ``chaotic'' skip connections) and not just those that are layered.

As for optimizing I/Os, it gives us more flexibility, as it allows us to employ any possible topological order of the connections, including those \emph{that do not correspond to layer-after-layer computations}. For example, it allows us to start computing neurons of the layers $i+1,i+2,\ldots$ even when not all neurons of the $i$-th layer have been computed. This can save I/Os, because when we finish computing a neuron from the $i$-th layer, we can directly reuse it to start computing neurons of the $(i+1)$-st layer, instead of storing it and reading it again later. 
Since for each connection we perform a computation for which we need the value of the input neuron and the partial sum at its output neuron, we would like to find an order in which connections that have neurons in common, are clustered together.

\section{Bounds for inference}
In this section we present generic bounds on the I/O-complexities of inference. They serve as guidelines for the more instance-specific optimizations that we consider later on.

\begin{Theorem}\label{oinft}
Let $\mathcal{N}$ be a connected FFNN and assume $M\geq 3$. Then the optimal number of I/Os for inference satisfies \begin{equation} \label{tbi}
        W+N+S\leq \textbf{I/Os}(\mathcal{N},M)  \leq 2\cdot(W+N-I).
    \end{equation}
     The optimal number of read-I/Os for this problem satisfies \begin{equation} \label{rbi}
        W+N\leq \textbf{rI/Os}(\mathcal{N},M)  \leq 2\cdot W +N-I.
    \end{equation}
     The optimal number of write-I/Os for this problem satisfies \begin{equation} \label{wbi}
        S\leq     \textbf{wI/Os}(\mathcal{N},M)  \leq N-I.
    \end{equation}
\end{Theorem}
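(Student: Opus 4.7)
The plan is to prove the lower bounds by counting which primitive data items must at minimum be read and written, and to establish the upper bounds constructively by exhibiting a computation strategy via Algorithm \ref{alginf}.

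For the lower bounds, I observe that the $W+N$ data items stored in slow memory at the start, namely the $W$ weights, the $I$ input values, and the $N-I$ biases, are distinct items, and by connectedness each is used in at least one computation step. Since computation only takes place in fast memory, each of these items must be read at least once, so $\textbf{rI/Os}(\mathcal{N},M) \geq W + I + (N-I) = W + N$. For writes, each of the $S$ output values is produced in fast memory but must reside in slow memory at termination, requiring at least one write per output, so $\textbf{wI/Os}(\mathcal{N},M) \geq S$. Summing the two inequalities gives the total lower bound $W+N+S$.

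For the upper bounds, I would run Algorithm \ref{alginf} with a \emph{by-output} topological order: fix any topological order $v_1, v_2, \ldots, v_N$ of the neurons (inputs first) and list the connections so that all edges incoming to $v_i$ form a consecutive block, blocks ordered by $i$. Couple this with the \textbf{MIN} eviction policy, which is trivial to implement offline once the order is fixed. Since $M \geq 3$, the three slots required by each update step (weight, source value, partial sum) fit in fast memory, so no step is blocked. I would then count: each of the $W$ weights is loaded exactly once; each of the $N-I$ biases is loaded exactly once, as the initial partial sum of its neuron; and each of the $W$ connections contributes at most one additional read to fetch its source value, for a total of $\leq 2W+N-I$ read-I/Os. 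For writes, each non-input neuron's value, being computed once and unchanged thereafter, is written at most once (either upon first eviction from fast memory or, for output neurons, before termination), yielding $\leq N-I$ write-I/Os. Summing these gives the claimed $\leq 2(W+N-I)$.

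The main obstacle is verifying the ``at most one additional read per connection'' and ``at most one write per non-input neuron'' claims simultaneously. The argument rests on an invariant of the by-output schedule: throughout the consecutive block of incoming edges for a fixed $v_i$, its partial sum sits in fast memory and is never prematurely evicted (since $M \geq 3$ always affords three slots, and \textbf{MIN} never evicts the active partial sum, which is used again at the very next step), so the single bias load and single final write of $v_i$ account for all partial-sum traffic. Analogously, whenever $n_a$ is loaded for some connection $(a,b,w)$, \textbf{MIN} does not evict it before the corresponding update completes, so the per-connection source load never doubles. Once these invariants are established, the three counts combine cleanly into the stated upper bounds.
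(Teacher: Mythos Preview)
Your proposal is correct and follows essentially the same route as the paper: the lower bounds come from the obligatory reads of all $W+N$ parameters and writes of all $S$ outputs, and the upper bounds come from running Algorithm~\ref{alginf} on a by-output-neuron topological order with the \textbf{MIN} eviction policy, counting at most $2$ reads per connection plus one bias read and one write per non-input neuron. Your explicit articulation of the invariant that \textbf{MIN} never evicts the active partial sum during its block is, if anything, slightly more careful than the paper's own phrasing of the same point.
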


\begin{proof}
The weights, biases and input values have a total size of $N+W$. Since we cannot perform inference without having read all of these data at least once, we obtain the lower bound for the number of read-I/Os. Likewise, we cannot perform inference with less than $S$ write-I/Os, because (by definition of the inference problem) we need to write the values of all $S$ output-neurons. Hence, the lower bound for the total number of I/Os follows by adding the lower bounds for the read- and write-I/Os.

Now it remains to show that we can do inference using no more than $N+2\cdot W-I$ read-I/Os, $N-I$ write-I/Os and $2\cdot(N+W-I)$ overall I/Os.
To achieve this, we fix a topological order of the non-input neurons: $n_1,\ldots,n_{N-I}$. Then, we reorder the connections in such a way that their output neurons appear in the order of this topological order (notice that this is also a topological order of the connections; see Figure~\ref{Orderings_both} for an illustration of this association of topological orderings). Notice that this order is naturally partitioned into intervals: It begins with an interval of connections ending in $n_1$, followed by the connections ending in $n_2$ and so on. 

We now show that performing inference in this order with our inference-algorithm (Algorithm \ref{alginf} from the main paper) and a MIN eviction policy, costs at most $2W+N-I$ read-I/Os and $N-I$ write-I/Os. As we start reading the connections, we spend $1$ read-I/O to read the bias of $n_1$ and then at most $2$ read-I/Os for each of the connections that end on $n_1$ ($1$ for reading the input neuron and $1$ for reading the connection itself). Once we passed through this interval of connections that end on $n_1$, we apply the activation to finish the computation of neuron $n_1$. Then we continue with the interval of connections that end on $n_2$. Also in this interval of connections as well as in all following intervals of connections, we spend $1$ read-I/O for the bias of the output-neuron and at most $2$ on each of the connections. Since there is exactly one interval of connections for each of the $N-I$ non-input neurons, we spend at most $N-I$ read-I/Os to read the biases and at most additional $2W$ for the connections and their input-neurons, adding up to $2W+N-I$ read-I/Os. 

Now we count the write-I/Os deployed in this computation. 
Since all connections that end in the same neuron follow each other, we always compute neurons without having to write temporary values (once we start computing a neuron all of the following computation steps are also directed towards computing this neuron, and we only start computing another neuron once the previous is finished). From this it follows, that if we spend a write-I/O on some neuron, then we are writing a fully computed neuron value to slow memory. And hence, if we read and evict this value again, we do not spend another write-I/O on this value (since the fully computed value is already stored in slow memory, the efficient eviction policy will evict this value by deleting it). Therefore, we spend at most one write-I/O for each of the non-input neurons. Since there are $N-I$ non-input neurons, we have overall at most $N-I$ write-I/Os.
Adding the read- and write-I/Os, we conclude that the total number of I/Os is at most $N+2\cdot W-I+N-I=2\cdot(N+W-I)$.

\end{proof}

\begin{figure}
\centering
\includegraphics[width=\linewidth]{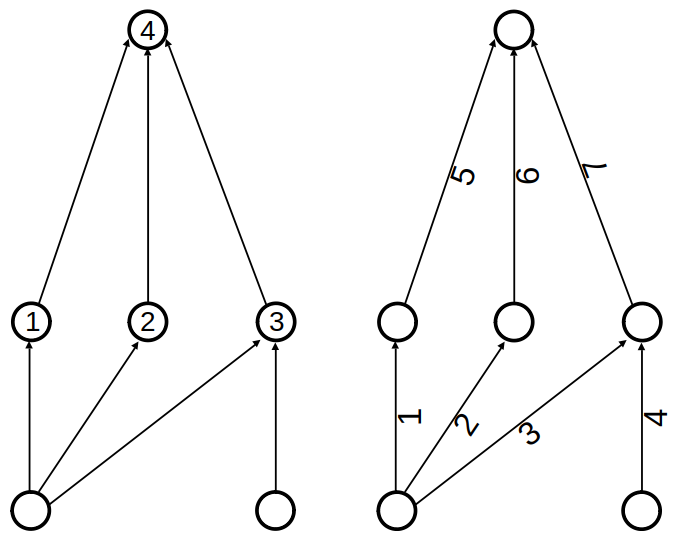}
\caption{Associated topological orderings of neurons and connections.}
\label{Orderings_both}
\end{figure}

Notice that in both Inequalities \ref{tbi} and \ref{rbi}, the term on the right hand side is at most twice as large as the term on the left hand side. The next Proposition establishes that none of the generic bounds given by this theorem could be tightened any further by multiplying it with a constant other than $1$.

\begin{Proposition}\label{tight}
As generic bounds that depend only on $W,N,I$, and $S$, the bounds in Theorem \ref{oinft} are tight: For each one of them and for any $\epsilon>0$, there are instances for which the true value differs from the bound by a multiplicative factor that lies in $[1-\epsilon,1+\epsilon]$.
\end{Proposition}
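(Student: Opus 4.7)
The plan is to match each of the six bounds by an instance whose optimum coincides with the bound exactly, giving ratio $1$, which lies in $[1-\epsilon,1+\epsilon]$ for every $\epsilon>0$. I will handle the three lower bounds by a common large-memory construction and the three upper bounds by a common small-memory construction.

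For the three lower bounds I would take any connected FFNN $\mathcal{N}$ together with a fast memory of size $M\geq W+N$, so that the entire problem fits in fast memory at once. The strategy that reads every weight, bias, and input exactly once, carries out all multiply-adds and activations in place without further I/O, and finally writes only the $S$ output values uses exactly $W+N$ reads, $S$ writes, and $W+N+S$ total I/Os, matching the three lower bounds of Theorem \ref{oinft} term by term.

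For the three upper bounds I would take the family $\mathcal{N}_k$ consisting of $k$ input neurons fully connected to a single output neuron, so that $W=k$, $N=k+1$, $S=1$, $N-I=1$, and set $M=3$. Running Algorithm \ref{alginf} with any topological order reads the bias once, and then for each of the $k$ edges reads one input and one weight (displacing the previous pair) and updates the output partial sum in fast memory, before finally writing the single output. A direct count gives $2W+1$ reads, $1$ write, and $2W+2$ total I/Os, which equal $2W+N-I$, $N-I$, and $2(W+N-I)$ respectively.

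The main obstacle is to argue that this small-memory computation is optimal on $\mathcal{N}_k$. I plan to close this by a matching lower bound on the instance: each of the $W$ distinct weights lives only in slow memory and is used in exactly one multiply-add, so each must be read at least once; in this single-output family each of the $I=W$ input values also participates in exactly one edge, so each input must likewise be read at least once with no room for amortization; the bias of the output must be read once to initialize its partial sum; and the single output must be written back once at the end. Summing these unavoidable I/Os already attains $2W+1$ reads and $1$ write, so the construction is optimal and the ratio to each of the three upper bounds is exactly $1$.
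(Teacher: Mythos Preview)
Your argument is correct. For the upper bounds you use the same star $\mathcal{N}_k$ as the paper's Lemma~\ref{tightur}; your optimality claim is just the read lower bound $W+N=2k+1$ from Theorem~\ref{oinft} specialized to this instance, together with the obligatory write of the single output, so no extra work is needed there. Your write-upper-bound instance is the trivial case (no hidden neurons, hence $S=N-I$) that the paper also mentions before constructing less degenerate examples in Lemma~\ref{tightuw}.

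The one genuine difference is the lower-bound side. The paper (Lemma~\ref{tightd}) keeps $M$ small and instead restricts the architecture so that any two consecutive layers fit in fast memory, then computes layer by layer with exactly $W+N$ reads and $S$ writes. You instead keep the architecture arbitrary and inflate $M\geq W+N$ so that everything fits at once. Both establish tightness of the three lower bounds; your route is shorter, while the paper's version carries the extra information that the lower bounds are attained already for bounded $M$, which feeds into the later discussion of how little the optimum depends on $M$.
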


To prove this Proposition, we need to show that for each of the bounds of Theorem \ref{oinft}, there are instances that are arbitrarily close to the bound (close in terms of multiplicative factors).

The first lemma establishes that there are instances that exactly attain all lower bounds of Proposition \ref{tight} (and hence, are obviously arbitrarily close).

\begin{Lemma}\label{tightd}
On any FFNN $\mathcal{N}=[L_1,L_2,\ldots,L_d]$ in which any two consecutive layers $L_i,L_{i+1}$ have together at most $M-1$ neurons, we can do inference, with a number of read-, write-, and total I/Os that matches the lower bounds given by Theorem \ref{oinft}.
\end{Lemma}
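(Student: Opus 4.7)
\medskip

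\noindent\textbf{Proof plan.} My plan is to exhibit an explicit layer-by-layer computation strategy (i.e., a topological order of connections plus an eviction policy) that achieves each of the three lower bounds of Theorem~\ref{oinft} simultaneously. Since the three lower bounds come from: (i)~having to read each of the $W+N$ weights, biases, and inputs at least once, (ii)~having to write each of the $S$ outputs at least once, and (iii)~their sum, it suffices to produce a strategy that reads every weight, bias, and input exactly once, and writes only the $S$ output neurons (and nothing else). The hypothesis $|L_i|+|L_{i+1}|\leq M-1$ is precisely the room required to carry this out without ever having to evict and re-read a partially computed value.

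\medskip

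\noindent\textbf{Construction.} I would order the connections so that those ending in $L_2$ come first, then those ending in $L_3$, and so on (any internal order within each block is fine). The computation proceeds as follows. First, read the $I=|L_1|$ input values into fast memory. Then, for $i=1,2,\ldots,d-1$, assume inductively that the values of $L_i$ are already in fast memory. Read the $|L_{i+1}|$ biases of $L_{i+1}$; by the assumption $|L_i|+|L_{i+1}|\leq M-1$, all of $L_i$'s values together with all the (growing) partial sums of $L_{i+1}$ fit in fast memory with at least one free slot. Using that free slot, process the connections from $L_i$ to $L_{i+1}$ one by one: read the weight, update the partial sum of the output neuron already in fast memory, then delete the weight. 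Once the block for $L_{i+1}$ is finished, apply the activation function in place to every neuron of $L_{i+1}$; if $i+1=d$, write each of these $S$ output values to slow memory. Finally, delete every value of $L_i$ from fast memory (for $i<d-1$ they are non-output neurons already fully consumed, so the efficient eviction policy removes them for free; for $i=d-1$ no deletion is needed).

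\medskip

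\noindent\textbf{Counting.} I would then tally the I/Os of this strategy. Reads: each of the $I$ inputs, each of the $N-I$ biases, and each of the $W$ weights is read exactly once, giving $W+N$ read-I/Os and matching the lower bound in \eqref{rbi}. Writes: a neuron value is written to slow memory only at the very end of the output layer block, and only for the $S$ output neurons; all intermediate partial sums live entirely in fast memory because there is always room to keep the current and next layer together, so no non-output neuron is ever written. This gives exactly $S$ write-I/Os, matching \eqref{wbi}. Summing yields $W+N+S$ total I/Os, matching \eqref{tbi}.

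\medskip

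\noindent\textbf{Where the work is.} The only subtle point is confirming that the memory invariant is maintained at every moment, in particular during the loading of biases and during the read-update-delete cycle for each weight; this is a direct calculation from $|L_i|+|L_{i+1}|\leq M-1$ and the fact that the weight temporarily occupies one additional slot before being deleted. No other case analysis is required, and the lemma follows.
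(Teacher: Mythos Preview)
Your proposal is correct and follows essentially the same approach as the paper: a layer-by-layer schedule that keeps $L_i$ and the partial sums of $L_{i+1}$ simultaneously in fast memory, cycles weights through the single remaining slot, and writes only the $S$ outputs. The counting and the memory invariant you describe match the paper's argument essentially verbatim.
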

\begin{proof}
We prove this by describing a computation strategy that achieves this.

Read the $|L_1|$ input values into fast memory. Also initialize in fast memory $|L_2|$ partial sums with the biases of the neurons in $L_2$. Since $|L_1|+|L_2|\leq M-1$, we have in fast memory free space for at least one more value. We use this free space to iterate over the weights between $L_1$ and $L_2$ and perform the following steps for each one of them:
\begin{enumerate}
    \item Read the weight.
    \item Multiply it with the value of its input neuron (which is already in fast memory).
    \item Add this product to the partial sum of its output neuron.
    \item Delete the weight from fast memory.
\end{enumerate}
Once this iteration is finished, we apply the activation function to the partial sums of $L_2$, which finishes the computation of the neurons in this layer. Now we do not need the values from $L_1$ anymore and can delete them from fast memory. This gives us enough free space to read the biases of $L_3$, and compute this layer in the same way. We proceed like this, computing layer after layer. Once we finish the computation of the neuron-vales of the last layer, we write those $S$ values back to slow memory.

Notice that we read each of the $N-I$ biases, $I$ input values, and $W$ weights exactly once. Hence we use $N+W$ read-I/Os. We only use write-I/Os to write the $S$ output values. Hence, altogether we use $N+W+S$ I/Os.
\end{proof}

Alternatively, we could have proved the previous Lemma by verifying that an FFNN with the mentioned properties, can be constructed with the Compact Growth method that we introduce later.
The next lemma shows that there are instances that use a number of read- and total I/Os that are arbitrarily close to their upper bound.
\begin{Lemma}\label{tightur}
For every $\epsilon>0$, there exist FFNNs $\mathcal{N}$ (of arbitrary size) that have $I/Os(\mathcal{N},M)>(1-\epsilon)\cdot 2\cdot (W+N-I)$ and $rI/Os(\mathcal{N},M)>(1-\epsilon)\cdot (2\cdot W+N-I)$.
\end{Lemma}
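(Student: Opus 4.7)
The plan is to exhibit a very simple family of FFNNs on which both the lower and upper bounds of Theorem~\ref{oinft} collapse to the same value, so that the actual optimum is forced to equal the upper bound. Concretely, for every positive integer $n$, let $\mathcal{N}_n$ be the FFNN consisting of $n$ pairwise disjoint single-edge networks: $n$ input neurons $a_1,\ldots,a_n$, $n$ output neurons $b_1,\ldots,b_n$, and one weighted connection $(a_i,b_i,w_i)$ per $i$, with no sharing of neurons across edges. This gives $W=n$, $I=n$, $N=2n$, $S=n$, so the upper bounds of Theorem~\ref{oinft} evaluate to $2W+N-I=3n$ and $2(W+N-I)=4n$.

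The key observation is that the generic lower bounds from Theorem~\ref{oinft} already match these upper bounds on $\mathcal{N}_n$: the read-lower-bound gives \textbf{rI/Os}$(\mathcal{N}_n,M)\geq W+N=3n$, and the write-lower-bound gives \textbf{wI/Os}$(\mathcal{N}_n,M)\geq S=n$. Together with the matching upper bounds from the same theorem, this pins down \textbf{rI/Os}$(\mathcal{N}_n,M)=3n$ and \textbf{I/Os}$(\mathcal{N}_n,M)=4n$ for every fast-memory size $M\geq 3$. There is no real optimization obstacle here: since no two edges share any data item, every scheduling strategy degenerates to processing $n$ independent atomic edge computations, each of which necessarily costs $3$ reads and $1$ write.

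Finally, for any $\epsilon>0$ and any target network size, I would pick $n$ large enough; the equalities above immediately yield \textbf{rI/Os}$(\mathcal{N}_n,M)=2W+N-I>(1-\epsilon)(2W+N-I)$ and \textbf{I/Os}$(\mathcal{N}_n,M)=2(W+N-I)>(1-\epsilon)\cdot 2(W+N-I)$, which is exactly the lemma's claim. The only thing to double-check is that $M\geq 3$ suffices to perform each per-edge computation (holding input, bias/partial sum, and weight simultaneously in fast memory), which is immediate from the model.
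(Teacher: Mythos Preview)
Your approach is essentially the paper's: exhibit an FFNN on which the lower and upper bounds of Theorem~\ref{oinft} coincide, so the optimum is forced to equal the upper bound exactly. The paper uses a single star ($I$ input neurons all feeding one output neuron, so $N=I+1$, $W=I$, $S=1$, and one checks $W+N+S=2I+2=2(W+N-I)$ and $W+N=2I+1=2W+N-I$); you use a disjoint union of $n$ single edges, which produces the same collapse of bounds.

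One small point to clean up: your $\mathcal{N}_n$ is \emph{disconnected}, whereas Theorem~\ref{oinft}---whose tightness this lemma is supposed to witness---is stated for connected FFNNs. Your self-contained per-edge argument (``no two edges share data, so each costs exactly $3$ reads and $1$ write'') is fine on its own, but the parts of the proposal that invoke Theorem~\ref{oinft} directly on $\mathcal{N}_n$ are formally out of scope. Replacing the matching by the star (or simply adding a single extra output neuron fed by all of $b_1,\ldots,b_n$) removes this wrinkle without changing the idea.
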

\begin{proof}
This is the case for architectures, where a large proportion of the neurons are input neurons. For example, in a tree with $I$ input neurons, all connected to a single output neuron, we have $I/Os(\mathcal{N},M)=\cdot 2\cdot (W+N-I)$ and $rI/Os(\mathcal{N},M)= 2\cdot W+N-I$.
\end{proof}

The next lemma shows that there are instances that use a number of write-I/Os that is arbitrarily close to the upper bound.

\begin{Lemma}\label{tightuw}
For every $\epsilon>0$, there exist FFNNs $\mathcal{N}$ (of arbitrary size) that have $wI/Os(\mathcal{N},M)>(1-\epsilon)\cdot (N-I)$.
\end{Lemma}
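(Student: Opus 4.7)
The plan is to exhibit a family of FFNNs in which every non-input neuron is also an output neuron. In such networks, the generic lower bound $wI/Os(\mathcal{N},M)\geq S$ from Theorem \ref{oinft} is already equal to the upper bound $N-I$, so no delicate eviction argument is required.

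Concretely, for any desired size I would take a bipartite DAG $\mathcal{N}$ with $I$ input neurons and $N-I$ non-input neurons, in which every input neuron is connected to every non-input neuron, and every non-input neuron is declared to be an output neuron of the network (so $S=N-I$). The graph is connected and acyclic, and it admits the partition into input/non-input/output neurons required by the model of Section \ref{model}, so it falls within the hypotheses of Theorem \ref{oinft}. By letting $N-I$ grow, instances of arbitrary size are obtained, as the lemma demands.

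With this construction, Theorem \ref{oinft} directly yields $wI/Os(\mathcal{N},M) \geq S = N-I$, because by the definition of inference the value of each of the $S$ output neurons must end up stored in slow memory. Hence
\[
wI/Os(\mathcal{N},M) \;\geq\; N-I \;>\; (1-\epsilon)\,(N-I)
\]
for every $\epsilon>0$, which is exactly the statement of the lemma.

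I do not anticipate a genuine obstacle here: once one observes that $S$ is allowed to equal $N-I$ in the model, the write-I/O lower bound from Theorem \ref{oinft} does all the work, and a clever scheduling argument is unnecessary. The only sanity check is that a DAG with no strictly hidden neurons still qualifies as an FFNN in the paper's framework, which it does because the model only asks for a DAG together with distinguished sets of input and output neurons (and there is no requirement that intermediate layers be nonempty).
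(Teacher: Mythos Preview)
Your proposal is correct and matches the paper's own opening remark: the proof there begins by noting that any FFNN in which every neuron is either an input or an output neuron (so $S=N-I$) already satisfies the inequality, which is precisely your bipartite construction. The paper then adds, for illustration, a ``less trivial'' family with a hidden layer of size $h$ and $S>h(1-\epsilon)/\epsilon$, but this extra example is not needed for the lemma and rests on the same lower bound $wI/Os\geq S$ from Theorem~\ref{oinft}.
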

\begin{proof}
Notice that this inequality is satisfied by any FFNN for which all neurons are either input or output neurons. But to give less trivial examples, we will construct FFNNs with hidden neurons that satisfy this inequality.

For $M,h,w\in\mathbb{N}^+$, consider the FFNN that has $I$ input neurons, $S$ output neurons, and one hidden layer with $h$ neurons. Clearly, this FFNN requires at least $S$ write-I/Os. So, for any parameter configuration $I,h,S$ with $S>h\cdot (1-\epsilon)/\epsilon$, we have $wI/Os(\mathcal{N},M)>(1-\epsilon)\cdot (N-I)$.
\end{proof}

The three previous lemmata provide the \emph{tight extremal cases} that bring us into the position to prove Proposition \ref{tight}. Although the constructed architectures in the proofs of these lemmata were ``artificial'', they suffice to show that generic bounds that only depend on $I,S,N,$ and $W$, cannot be tighter (tighter by a constant multiplicative factor other than $1$) than our bounds. Yet, notice that, despite being ``artificial'', these extremal cases are more than just ``a few cornercases''. In fact, the constructions in each of these proofs could have been chosen to be arbitrarily large. 

\begin{proof}[Proof of Proposition \ref{tight}]
Lemma \ref{tightd} shows that the lower bound on the number of read-, write-, and total I/Os is tight.
Lemma \ref{tightur} shows that the upper bound on the number of read-I/Os and total I/Os is tight. Lemma \ref{tightuw} shows that the upper bound on the number of write-I/Os is tight.
\end{proof}

Notice that none of the bounds depends on $M$. 
This is surprising, because for most computational problems, I/O-complexities depend strongly on the memory size\footnote{For example, in the case of matrix multiplication, the dependency on $M$ is of the order $O(1/\sqrt{M})$ \cite{redblue}.}. As we increase $M$, we gain more flexibility that we can use to avoid I/Os. Also surprising is the fact that these bounds do not depend on specific properties of the network, except for its sizes. This \emph{does not mean that the overall I/O-complexity does not depend on these quantities} (memory size and  FFNN architecture), but from the tightness of our bounds we conclude that the dependence is only moderate.

For write-I/Os, this is very different.
The bounds for write-I/Os given by Theorem \ref{oinft} are less tight. In fact, they can be arbitrarily loose (despite being optimal in the sense of Proposition \ref{tight}). The problem for write-I/Os is that, it is simply not possible to give interesting bounds that only depend on $W,N,I$, and $S$ but no other factors such as the connection order, the memory size $M$, and the FFNN architecture (this impossibility is implied by Proposition \ref{tight}). 
The following Proposition shows that inference in a \emph{layer-after-layer} fashion can be arbitrarily more expensive than using an optimal order. 

\begin{Proposition} \label{nogoodlayer}
    For every $c\in\mathbb{N}$ and every memory size $M\in\mathbb{N}$, there exists an FFNN $\mathcal{N}_{M,c}$, such that performing inference in a \emph{layer-after-layer} fashion, requires at least $c$ times more write-I/Os than optimal.
\end{Proposition}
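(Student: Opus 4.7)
The plan is to construct $\mathcal{N}_{M,c}$ as $k := 2M$ disjoint directed paths of length $d := 2c$, arranged in parallel (taking $c \ge 2$; the cases $c \le 1$ are trivial since the optimum lower-bounds layer-after-layer). In each path the first neuron is an input neuron, the last is an output neuron, and the $d-2$ middle neurons are hidden; different paths share no neuron or connection. This yields a $d$-layer FFNN with $I = S = k$, $N = kd$, $W = k(d-1)$, and uniform layer width $k > M$.

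For the optimal side, I would use a chain-by-chain schedule: process all connections of path $1$ in topological order, then of path $2$, and so on. At any point during the traversal of a single chain, fast memory only needs the current neuron value, the incoming weight, and the bias of the next neuron, which fits in the $M \ge 3$ budget, so no intermediate value is ever evicted. Only the final activation of each chain is written to slow memory, giving exactly $k = S$ write-I/Os and matching the lower bound of Theorem \ref{oinft}.

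For the layer-after-layer lower bound, I argue that by the scheduling definition, all $k$ connections terminating in layer $i+1$ are processed before any connection terminating in layer $i+2$. At the end of stage $i \to i+1$, the $k$ freshly produced layer-$(i+1)$ activations are all still required (each is consumed by exactly one connection of the next stage), yet fast memory holds only $M < k$ values, so at least $k - M$ of them have been evicted. Since these hidden activations were absent from slow memory before this stage, each such eviction is a genuine write-I/O. Summing over the strictly intermediate layers $2, \ldots, d-1$ yields at least $(d-2)(k - M)$ writes, plus the $S = k$ mandatory writes for the output layer, for a total of at least $(2c-2)M + 2M = 2cM = c \cdot k$ writes, which is $c$ times the optimum.

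The main subtlety is justifying that those forced evictions really cost write-I/Os; this uses the convention from Section \ref{model} that an eviction is free only when the value is already in slow memory or will never be used again, neither of which applies to freshly computed hidden activations that are still consumed in the following stage. The remainder is bookkeeping, and one can scale $k$ and $d$ to deliver any desired ratio $c$.
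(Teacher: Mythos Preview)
Your argument is correct and follows the same high-level idea as the paper: build many parallel chains so that a chain-by-chain order needs essentially only the mandatory output writes, while any layer-after-layer order must spill a constant fraction of each hidden layer. The bookkeeping you do for the eviction lower bound ($k-M$ forced writes per intermediate layer) is sound.

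The paper's construction is a slight variant of yours: instead of $2M$ fully disjoint paths, it ties all $2M$ chains together at a single input neuron and a single output neuron and uses $c$ hidden layers. This has two consequences worth noting. First, the resulting FFNN is \emph{connected}, which matches the standing assumption used elsewhere in the paper (e.g., in Theorem~\ref{oinft}); your network consists of $2M$ disconnected components. The proposition does not literally demand connectedness, so this is not a gap, but you could recover it for free by adding one common source and one common sink. Second, because the paper has a single output neuron, the optimal number of write-I/Os is $1$, whereas the layer-after-layer schedule still needs at least $Mc$ writes; the paper thus obtains a ratio of $Mc$, not just $c$. Your construction hits the target ratio $c$ exactly, which is what the statement asks for, but the paper's version shows that the gap can in fact be made to grow with $M$ as well.
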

\begin{proof}
Consider a sparse FFNN that has $c+2$ layers: $1$ neuron in the input layer, $2M$ neurons in each of the $c$ hidden layers, and one neuron in the output layer. Let each hidden neuron have exactly one incoming and one outgoing connection, let the input-neuron be connected to each neuron of the first hidden layer, and let the output-neuron have incoming connections from all neurons of the last hidden layer (in other words, this FFNN has $2M$ chains of neurons of length $c+2$ that meet in the input and output neuron).
If we perform inference on this FFNN in a layer-after-layer fashion, we need the values of all $2M$ neurons of each hidden layer to compute the next layer. Since our fast memory has only capacity for $M$ neurons, we will need to store the other $M$ neurons, requiring at least $M$ writes for each hidden layer.
Hence, if we perform inference on this FFNN in a layer-after-layer fashion, we would need to use at least $M\times c$ write-I/Os. Yet, if we compute chain after chain, a single write-I/O would suffice.
\end{proof}

\section{Connection Reordering: Adapting the order to the FFNN and hardware}
In this section we introduce \emph{Connection Reordering}, which is a method to optimize the topological order of the connections for a given FFNN architecture and memory size $M$. This method depends on the following hyperparameters: The \textbf{window size} $ws\in\mathbb{N}$, the \textbf{cooling rate} $\sigma\in \mathbb{R}$, and the \textbf{number of iterations} $T\in\mathbb{N}$. The high-level idea of this method is based on Simulated Annealing~\cite{SA}: Over $T$ iterations, we perform random changes to the topological order (we call this \textit{creating neighbors}) and either retain or discard the changes with a probability that depends on the quality of the old and the new order (called \textit{updating}). Now, we fully specify this method by describing the processes of creating neighbors and updating.  

\subsection{Creating neighbors}
We start with a topological order of the connections $e_1,e_2,\ldots, e_W$ and let \textbf{OldI/Os} $\in \mathbb{N}$ denote the number of I/Os used in this topological order. The number of I/Os obviously depends on the memory size and eviction policy, but those are fixed throughout the execution of this algorithm and hence the number of I/Os depends only on the topological order of the connections. 

First, we choose uniformly at random one connection $e_i$. We let $w$ be an integer that we choose uniformly at random from $\lbrace 0,1,\ldots, ws-1\rbrace$. We consider the window of connections $e_i,e_{i+1},\ldots ,e_{\min(i+w,W)}$. We choose the direction in which we move the connections from this window: Either left or right with probability $0.5$. 

\paragraph{Case 1: Moving to the left}
If we move the connections to the left, we start moving the leftmost connection $e_i$ of the window. We move $e_i$ to the left until we encounter another connection $e_s$ that has the same input neuron as $e_i$, or whose output neuron is equal to the input neuron of $e_i$ (if we never encounter such a connection, we move $e_i$ to the very beginning of the order). We insert $e_i$ right next to $e_s$ so that $\ldots,e_s,e_i,e_{s+1},\ldots$ is the new order. Note that this is again a \emph{topological} order. Then we continue moving the second leftmost connection in the window in the same fashion. We continue like this until we moved all connections from this window.

\paragraph{Case 2: Moving to the right}
If we move the connections to the right, we start moving the \emph{rightmost} connection of the window. We move it until we encounter another connection $e_z$, which has the same \emph{output} neuron as $e_i$ or whose input neuron is equal to the output neuron of $e_i$. We insert $e_i$ right before $e_z$ so that $\ldots,e_{z-1},e_i,e_{z},\ldots$ is the new order. Then we continue with the second rightmost connection from the window. We continue like this until we moved all connections from this window. 

\subsection{Updating}

After we moved all connections from the window, we have a new topological order. We denote it $\tilde{e_1},\tilde{e_2},\ldots,\tilde{e_W}$. We measure how many I/Os are used to perform inference in this new order and call this number \textbf{newI/Os}. If \textbf{newI/Os}<\textbf{oldI/Os}, we certainly update $e_1=\tilde{e_1},e_2=\tilde{e_2},\ldots,e_W=\tilde{e_W}$ and \textbf{oldI/Os}$=$\textbf{newI/Os}. 

If \textbf{newI/Os}$\geq$\textbf{oldI/Os}, we either update or go back to the old order. We decide this at random, choosing to update with a probability 
    $2^{-(\text{newI/Os}-\text{oldI/Os})\cdot t^{\sigma}},$
where $t$ is the iteration number.

\section{Compact Growth: Adapting FFNNs to hardware}\label{genFNNs}

In this section we present a construction scheme that completely characterizes the FFNN architectures that allow inference with a minimal number of I/Os for a given memory size $M$. This answers the following question: For a given memory size $M$, which are the FFNN architectures on which we can do inference with a number of I/Os that matches the lower bound given by Theorem \ref{oinft}?  Or equivalently: Which are the FFNNs on which we can do inference without having to write or read intermediate values due to lack of memory?
The idea is to couple the construction of the FFNN closely to the steps in the inference computation. More precisely, we will build the FFNN by a sequence of steps of four different types.

We introduce the concepts of \textit{pebbles} and \textit{bags} to reason about I/Os in our computation. Each pebble corresponds to one neuron. A pebble can be either gray (if it is not yet fully computed)
or black (if it is fully computed and can be used by its outgoing connections). The bag represents the fast memory.

\subsection{The construction rules}

We start our construction with an ``empty FFNN'' (there are no neurons and no connections) and an empty bag that represents the content of our fast memory. We build up the FFNN by a sequence of construction steps of the following four types (for each type we write in parentheses the corresponding computation step into which it can be translated).

\textbf{1)} Whenever we have less or equal than $M-2$ pebbles in our bag, we can either add a gray or a black pebble to our bag and simultaneously add an isolated neuron to our FFNN (reading a neuron into fast memory).\\
\textbf{2)} When we have a black and a gray pebble in our bag, we can draw a connection in our FFNN, that starts at the neuron that corresponds to the black pebble and ends at the neuron that corresponds to the gray pebble. (multiply weight with the value of the input neuron and add it to the partial sum of the output neuron).\\
\textbf{3)} We can turn a gray pebble into a black pebble (finish computation of the neuron by applying activation function).\\
\textbf{4)} We can remove a black pebble from the bag (delete from fast memory).\\

The following theorem summarizes the properties of FFNNs built by a sequence of the above steps.

\begin{Theorem}[Compact growth]\label{compactGrowth}
Let $\mathcal{N}$ be a connected FFNN with $W$ weights and $N$ neurons. Given a memory of size $M\geq 3$, we can do inference on $\mathcal{N}$ with $N+W$ read-I/Os and $S$ write-I/Os if and only if $\mathcal{N}$ can be constructed by the compact growth scheme that we just described.
\end{Theorem}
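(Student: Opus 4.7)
My plan is to prove both directions of the biconditional by exhibiting a tight, step-by-step correspondence between the four compact growth rules and the basic events of Algorithm~\ref{alginf}. Under this correspondence, a rule-1 event (adding a pebble for a neuron) costs one read-I/O for the input value or bias of that neuron; a rule-2 event (drawing a connection from a black pebble to a gray pebble) costs one read-I/O for the weight and performs the associated FMA in-place; a rule-3 event (turning gray to black) is free because the activation is applied within fast memory; and a rule-4 event (removing a black pebble) is free if the neuron is not an output and costs one write-I/O if it is. Since each neuron is introduced by exactly one rule-1 event and each connection by exactly one rule-2 event, this bookkeeping already sums to exactly $N+W$ reads and, if every output neuron is eventually evicted, exactly $S$ writes, which is the target total.

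For the forward direction ($\Leftarrow$), given a compact growth sequence constructing $\mathcal{N}$, I would simply execute the corresponding computational events in order and verify that fast memory is never overflowed. The rule-1 precondition ``at most $M-2$ pebbles in the bag'' guarantees that after the read fast memory holds at most $M-1$ neuron values, so at least one slot remains free for the weight needed by the next rule-2 step; rule 2's requirement that a black pebble (the fully computed input value) and a gray pebble (the accumulating partial sum) be present in the bag translates exactly to the memory configuration required for a valid FMA. Summing gives $N+W$ reads and $S$ writes. For the backward direction ($\Rightarrow$), given a computation of $\mathcal{N}$ achieving exactly $N+W$ reads and $S$ writes, I would invert the correspondence by tracing the computation: each neuron read becomes a rule-1 event (black for an input neuron, gray for a non-input neuron), each weight read becomes a rule-2 event, each activation becomes a rule-3 event, and each eviction becomes a rule-4 event. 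Because the budget is exactly the lower bound of Theorem~\ref{oinft}, every neuron value and every weight is read exactly once and no non-output neuron is ever written, so this trace reconstructs all of $\mathcal{N}$ and nothing more.

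The main obstacle I anticipate is reconciling the gap between the memory model (which permits up to $M$ items in fast memory) and the compact growth scheme (whose rule-1 precondition forces the bag to hold at most $M-1$ pebbles). I would close this gap by showing that any optimal computation can be locally reshuffled so that at most $M-1$ neuron values are ever simultaneously resident in fast memory. The key observation is that if the bag reaches $M$ pebbles, then neither a weight read nor another neuron read is possible without overflowing memory, so the only productive operations in the interim are activations and evictions. An activation applied in this interval could equally well have been applied before the read that pushed us to $M$ pebbles, because it only needs its incoming FMAs to be completed and no FMA can occur while the bag is full; hence the eviction that eventually frees a slot was already available immediately before that problematic read. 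Exchanging the order of the read and the eviction produces an equivalent computation with identical I/O totals but one fewer $M$-pebble moment. Iterating this local swap yields a computation whose bag is always of size at most $M-1$, so before every neuron read there are at most $M-2$ pebbles and the rule-1 precondition is satisfied, completing the construction of the compact growth sequence.
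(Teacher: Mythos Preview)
Your proposal is correct and follows essentially the same approach as the paper: both directions are handled via the direct rule-by-rule correspondence between compact growth steps and the events of Algorithm~\ref{alginf}, and the only nontrivial point---that an optimal computation can be assumed to hold at most $M-1$ neuron values simultaneously---is resolved by the same swap argument (the paper phrases it as ``if we read an $M$-th value, there is no more space for a connection, and the computation is stuck until we remove one neuron value; but then we could just as well first remove this neuron value before reading another one''). Your treatment of this WLOG step is in fact slightly more careful than the paper's, since you explicitly argue that any intervening activation could also be commuted before the offending read.
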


\begin{proof}[Proof of Theorem \ref{compactGrowth}]
If we can build $\mathcal{N}$ by a sequence of these construction steps, then the corresponding steps in the parentheses, describe a valid sequence of computation steps for inference. By only allowing to insert pebbles when we have (strictly) less than $M-1$ pebbles in the bag, we ensure that there are never more than $M-1$ pebbles in memory and hence there is one free space for a connection. Hence, this sequence of computation steps can be executed with a fast memory of size $M$ without requiring any I/Os on temporary values.
This proves the ``if'' part of the statement.

Conversely, assume for a given $\mathcal{N}$ and $M\geq 3$, we can order the connections in a way that allows us to perform inference with a minimal number of I/Os.
Denote this order $e_1,e_2,\ldots,e_W$. We will show how we can translate the computation steps into a sequence of pebble-moves that creates the FFNN.
According to Theorem \ref{oinft}, a \textit{minimal number of I/Os} means that we perform $W+N$ read-I/Os and $S$ write-I/Os. Since this is the number of read-I/Os required to read all necessary parameters once, and write out the values of the output-neurons once, these I/Os only suffice to perform these necessary reads and writes. In particular, we cannot spend any more I/Os to write or read temporary values. 
Hence, for each neuron we will read one value at some moment (the bias or the input value), then have this value in fast memory for a certain interval of time during which we update it according to the progress of the computation, and then evict the fully computed neuron value when it is not needed anymore. 
After that we will not be able to use this value again. We translate this into our pebble construction: Putting a gray pebble into our bag, having it there for a certain amount of time, transforming it into a black pebble (once the neuron is fully computed), and then removing it (when the neuron value is not needed anymore and is evicted).

During this interval, this optimal computation strategy must use all incoming and outgoing connections of this neuron: The incoming connections during the first part of the interval (when the neuron is not yet fully computed and hence the pebble is gray) and the outgoing connections during the second part. Further, any time we use a connection, the ``other end'' of the connection must also be in fast memory and hence the corresponding pebble must be in the bag (remember that we insert a pebble whenever we read a bias/input and remove the pebble when we evict the value). Also, any time we use a connection, the color of the pebble of the input neuron is already black (because we turn a pebble black as soon as the neuron is fully computed) and the pebble of the output neuron is still gray. Hence, as we use the connections on the computation side, we can draw the connections on the side of the pebble model.

Due to the restricted memory size, we can assume without loss of generality that we have never more than $M-1$ neuron values in fast memory (if we read an $M$-th value, there is no more space for a connection, and the computation is stuck until we remove one neuron value; but then we could just as well first remove this neuron value before reading another one). And hence, we can assume without loss of generality that we only read values when there are at most $M-2$ values in fast memory. On the side of the pebble construction, this ensures that we only add pebbles, when there are at most $M-2$ pebbles in the bag.
\end{proof}

We can apply this theorem to the problem of \emph{optimal hardware for a given FFNN}. To do this, we consider the \textbf{bandwidth} of an FFNN, defined as the smallest $k\in\mathbb{N}$ for which there exists a topological order of the neurons, such that any two connected neurons are at most $k$ steps apart in the topological order. We prove now that when FFNNs have bandwidth $k$, we can build them with compact growth, using $M=k+2$. 
\begin{Corollary}\label{bwic}
    Let $\mathcal{N}$ have bandwidth $k\in \mathbb{N}$. If we have a memory size $M\geq k+2$, then we can perform inference on $\mathcal{N}$ without reading or writing any temporary values. 
\end{Corollary}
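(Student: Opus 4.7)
The plan is to apply Theorem \ref{compactGrowth}: since it characterizes exactly those FFNNs on which inference can be done with no temporary reads or writes as those constructible by compact growth with memory $M$, it suffices to exhibit an explicit compact growth construction of $\mathcal{N}$ using memory $M=k+2$. A construction that works with $k+2$ pebbles obviously also works with any larger $M$, so this covers the stated hypothesis $M\geq k+2$.

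First I would fix a topological order $n_1,n_2,\ldots,n_N$ of the neurons witnessing bandwidth $k$, so that every connection $(n_j,n_i)$ satisfies $1\leq i-j\leq k$. Then I would process the neurons in this order while maintaining the following invariant: immediately before $n_i$ is inserted into the bag, the bag holds exactly the black pebbles corresponding to those $n_j$ with $j<i$ that still have an outgoing edge to some $n_l$ with $l\geq i$. The bandwidth bound forces any such $j$ to satisfy $j\geq i-k$, so at most $k$ pebbles sit in the bag just before each insertion. Processing $n_i$ then goes as follows: apply rule (1) to insert $n_i$ (gray if $n_i$ has incoming edges, black if $n_i$ is an input neuron), which is legal because the bag holds at most $k=M-2$ pebbles; for each incoming connection $(n_j,n_i)$ apply rule (2) to draw it, using that by the invariant $n_j$'s pebble is black while $n_i$'s is gray; apply rule (3) to turn $n_i$ black; and finally apply rule (4) to remove every pebble whose associated neuron has no outgoing edges to any $n_l$ with $l>i$, which in particular evicts $n_{i-k}$ if it is still present and reestablishes the invariant for step $i+1$.

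The only step that requires care is checking that the invariant is genuinely preserved and that it implies the capacity bound, but both facts reduce to the bandwidth inequality $l-j\leq k$, so I do not anticipate a serious obstacle. Output neurons need no special treatment: they have no outgoing edges, so they are removed in the very step they are processed, which corresponds on the computational side to the single write-I/O per output counted in Theorem \ref{oinft}. Once the construction is described, invoking Theorem \ref{compactGrowth} yields inference on $\mathcal{N}$ with exactly $N+W$ read-I/Os and $S$ write-I/Os, so no temporary values are ever read or written.
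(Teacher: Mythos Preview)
Your proposal is correct and follows essentially the same route as the paper: fix a bandwidth-witnessing topological order of the neurons and exhibit a compact-growth construction that processes the neurons in this order, then invoke Theorem~\ref{compactGrowth}. The only cosmetic difference is the eviction rule---the paper removes the pebble inserted $k+1$ steps earlier whenever the bag is full (a FIFO discipline), whereas you remove pebbles eagerly as soon as they have no remaining outgoing edges; both keep the bag to at most $k$ pebbles before each insertion, so the argument goes through either way.
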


\begin{proof}
Let $\mathcal{N}$ have bandwidth $k$. This means that there exists a topological order of the neurons $n_1,n_2,\ldots ,n_N$, such that for any $i\in \lbrace 1,\ldots,N \rbrace$, all incoming connections to $n_i$ are among the $k$ previous neurons $\lbrace n_{max(1,i-k)},n_{max(1,i-k)+1},\ldots,n_{i-1}\rbrace$. So, we can build $\mathcal{N}$ with compact growth, by adding the pebbles according to the topological order of the neurons, and when the bag is full, removing the pebble that was added $k+1$ steps ago. This ensures that when we add a pebble of a neuron $n$ to the bag, the pebbles of all other neurons on which $n$ directly depends are also in the bag and we can add all connections that go to $n$. Hence, with a memory size of $k+2$, we can build $\mathcal{N}$.
\end{proof}

\begin{figure}[h!]
    \begin{subfigure}[b]{0.95\linewidth}
    \centering
    \includegraphics[width=\linewidth]{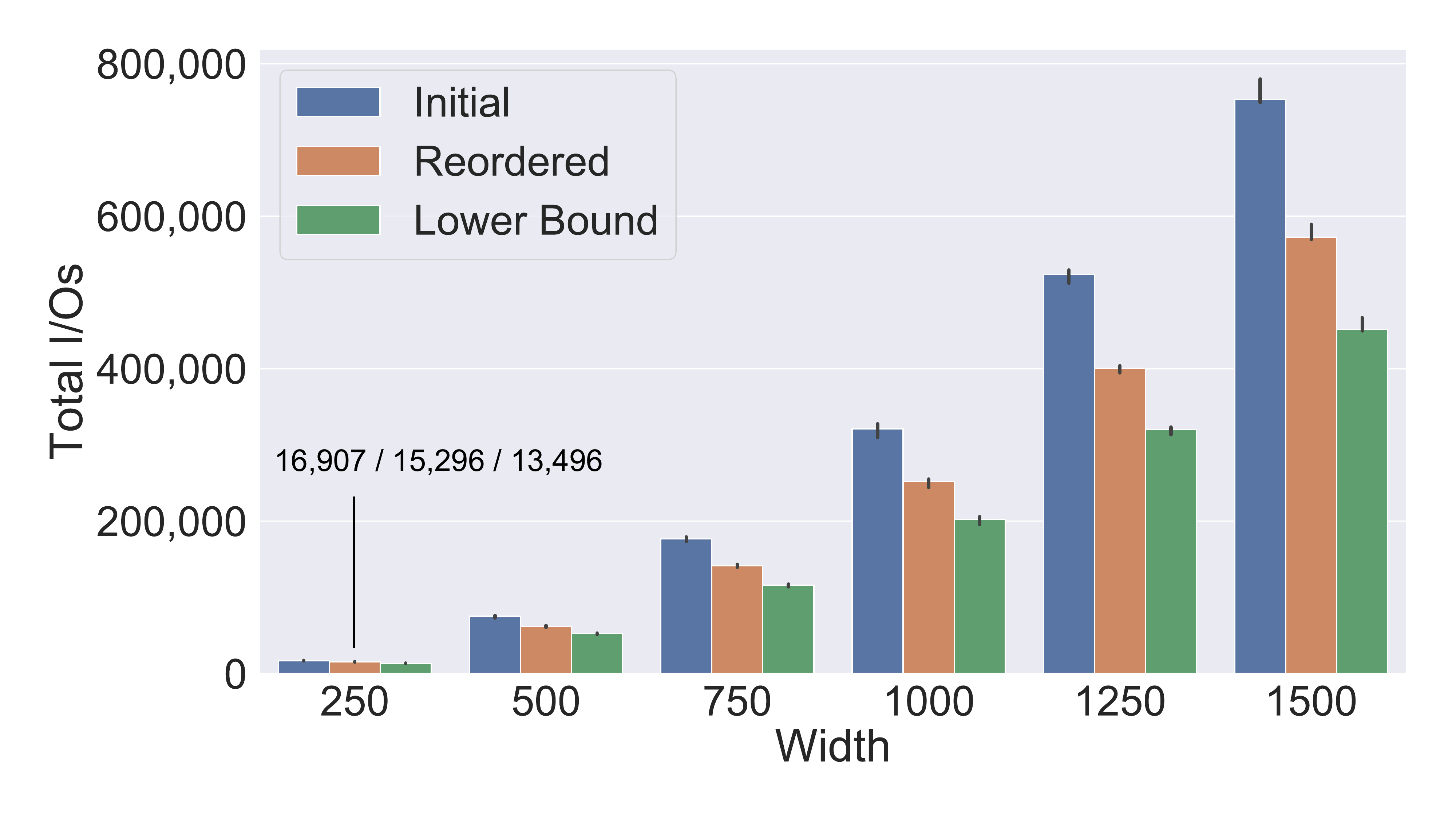}
    \vspace{-2em}
    \caption{Neurons per Layer}
    \end{subfigure}
    \begin{subfigure}[b]{0.95\linewidth}
    \centering
    \includegraphics[width=\linewidth]{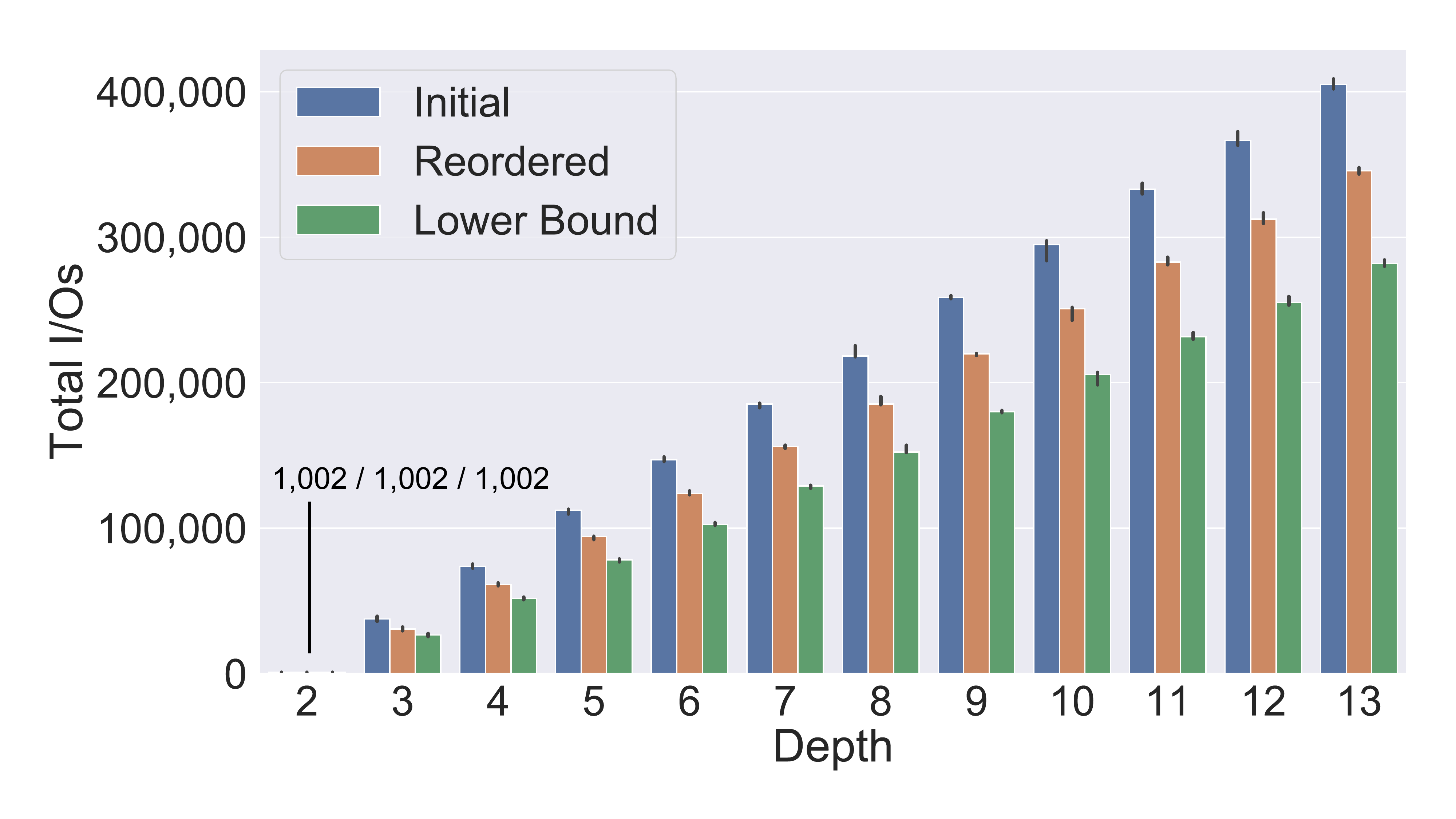}
    \vspace{-2em}
    \caption{Depth}
    \end{subfigure}
    \begin{subfigure}[b]{0.95\linewidth}
    \centering
    \includegraphics[width=\linewidth]{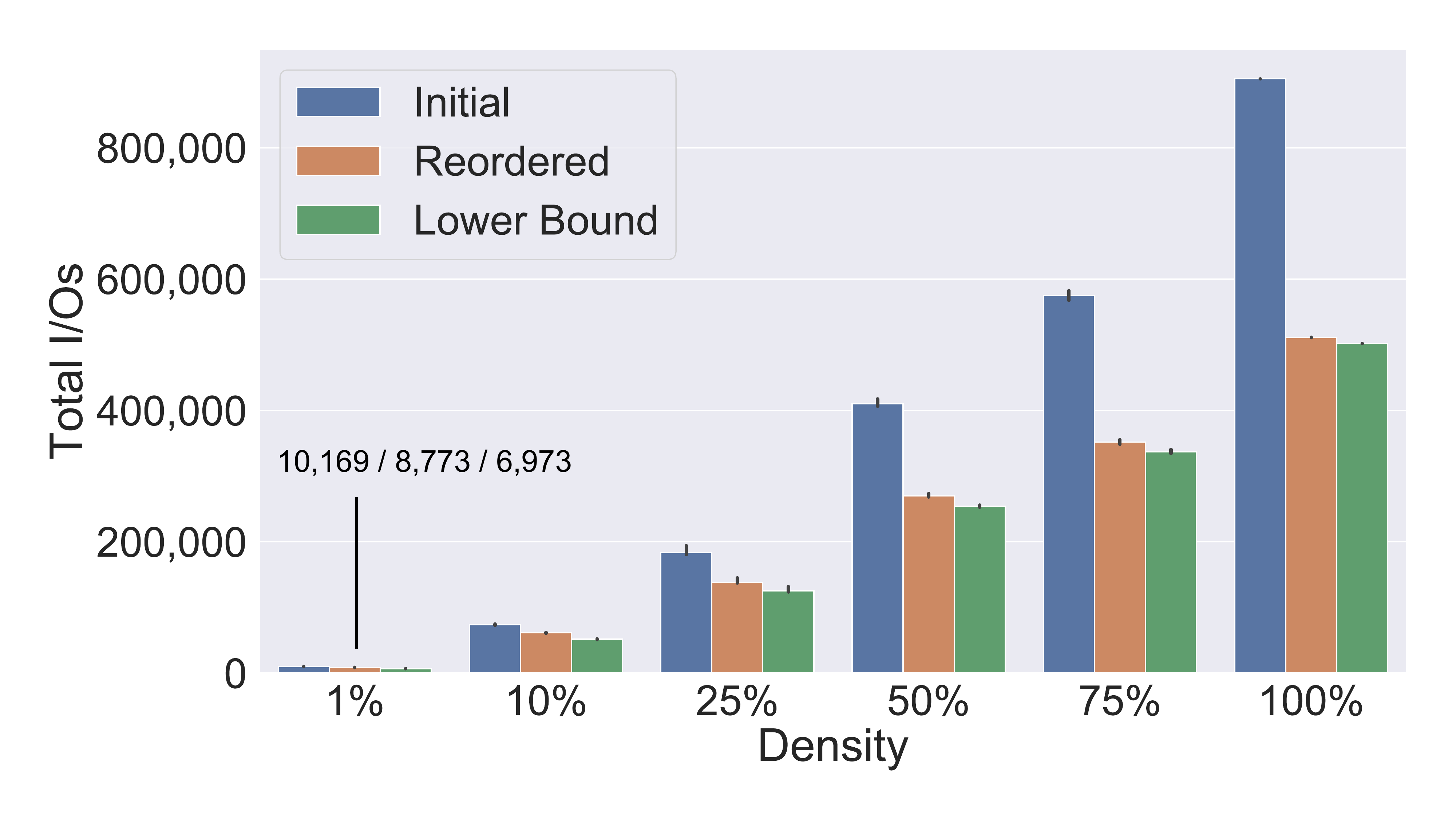}
    \vspace{-2em}
    \caption{Connection Density}
    \end{subfigure}
    \begin{subfigure}[b]{0.95\linewidth}
    \centering
    \includegraphics[width=\linewidth]{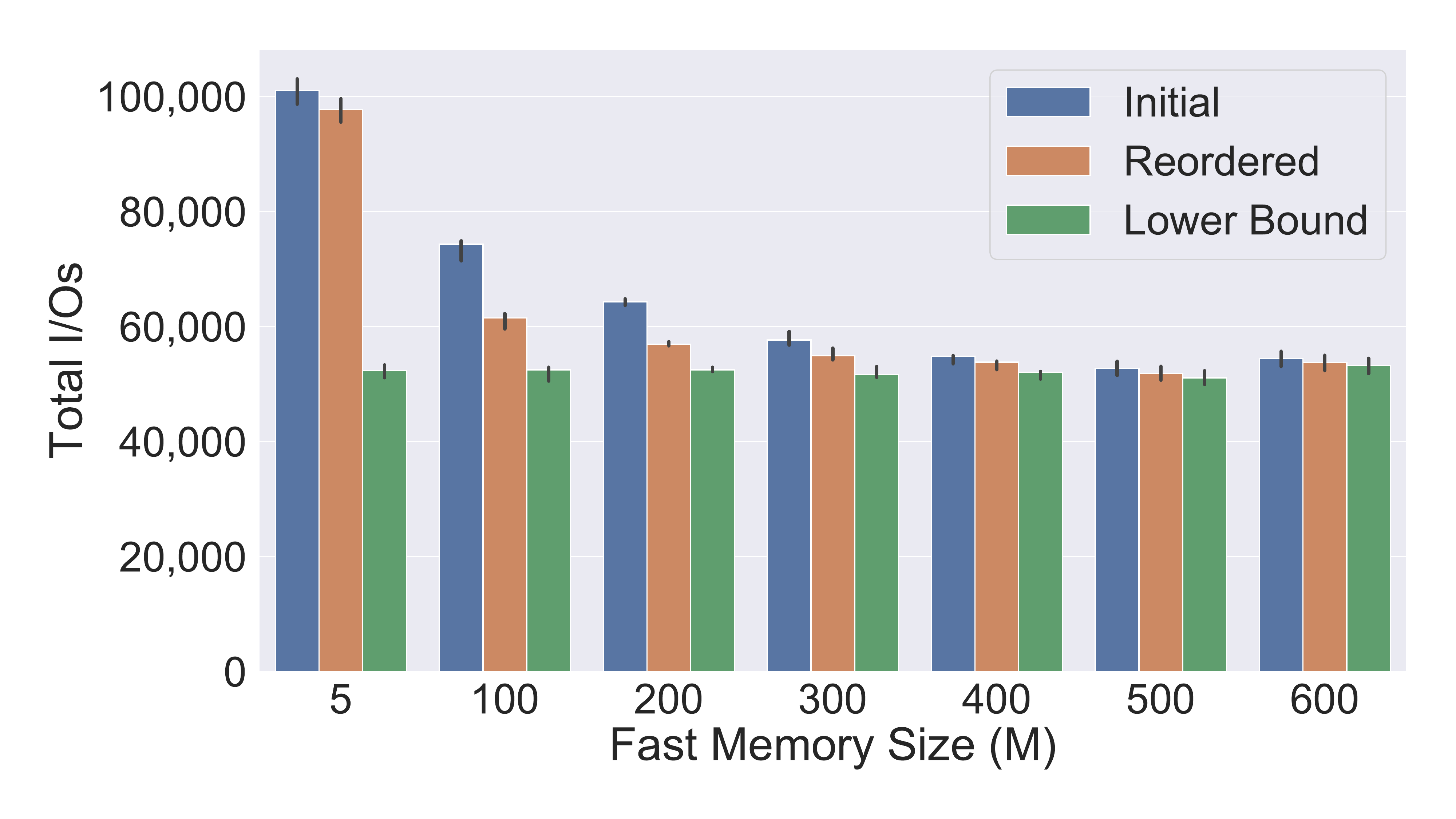}
    \vspace{-2em}
    \caption{Fast Memory Size}
    \end{subfigure}
    \caption{Connection Reordering for varying sparse neural network properties. Unless otherwise stated, the network is a 500-neuron wide, 4-layer MLP, with an edge density of 10\% and $M=100$. The lower bound is obtained from Theorem \ref{oinft}. When too small, I/Os for MLPs are annotated.}
    \label{fig:reordering-results}
    \vspace{-1em}
\end{figure}

\section{Experimental Evaluation}
To demonstrate both the theoretical accuracy as well as the practical utility of our results, we run both simulated experiments counting I/Os and measure CPU performance of the reordered sparse neural networks compared with high-performance libraries.

As inputs, we use two types of networks: MLPs with varying depth, breadth, and density; and fully-connected layers from the popular BERT~\cite{bert} Transformer~\cite{attention} neural network. For the former, we generate five randomly-sparse FFNNs (obtained by random edge sampling or using Compact Growth). For the latter, we use a pre-trained BERT$_\text{\textsc{LARGE}}$ neural network and perform magnitude pruning~\cite{sparsesurvey}.

\subsection{Simulated Experiments}

To test Connection Reordering (CR) and Compact Growth (CG), we implement Algorithm 1 and cache simulation, along with LRU, RR, and MIN eviction policies. 
We run each experiment configuration with five random MLPs and BERT, reporting the median values and 95\% nonparametric confidence intervals as error bars.

\subsubsection{Connection Reordering}
\label{Simux}
We evaluate CR by optimizing the I/Os of random sparse FFNNs with varying structural and sparsity properties. Figure~\ref{fig:reordering-results} shows four dimensions in which we vary a baseline FFNN: a 10\% dense, $4$-layer MLP with $500$ neurons in each layer and one output neuron, running inference with a fast memory size of $100$ elements. We either vary density, width, depth, or fast memory size, while keeping the other parameters constant at their baseline value. We use a MIN eviction policy. Initially, we order the connections as in the proof of Theorem \ref{oinft}, so that we are guaranteed to not use more than twice the optimal number of I/Os. Then, we apply CR with $T=$1,000,000, $\sigma=0.2$, and set $ws$ to four times the average in-degree of the network. As we can see in these examples, CR reduces the total number of I/Os further by up to $43.5\%$ and brings us up to $97.4\%$ closer to the theoretical lower bound. Further, we can see that CR gives consistently large improvements across all parameter configurations (except for the cases where the initial I/O-complexity is already close to the lower bound).

\begin{figure}[t]
    \centering
    \begin{subfigure}[b]{.8\linewidth}
    \centering
    \includegraphics[width=\linewidth]{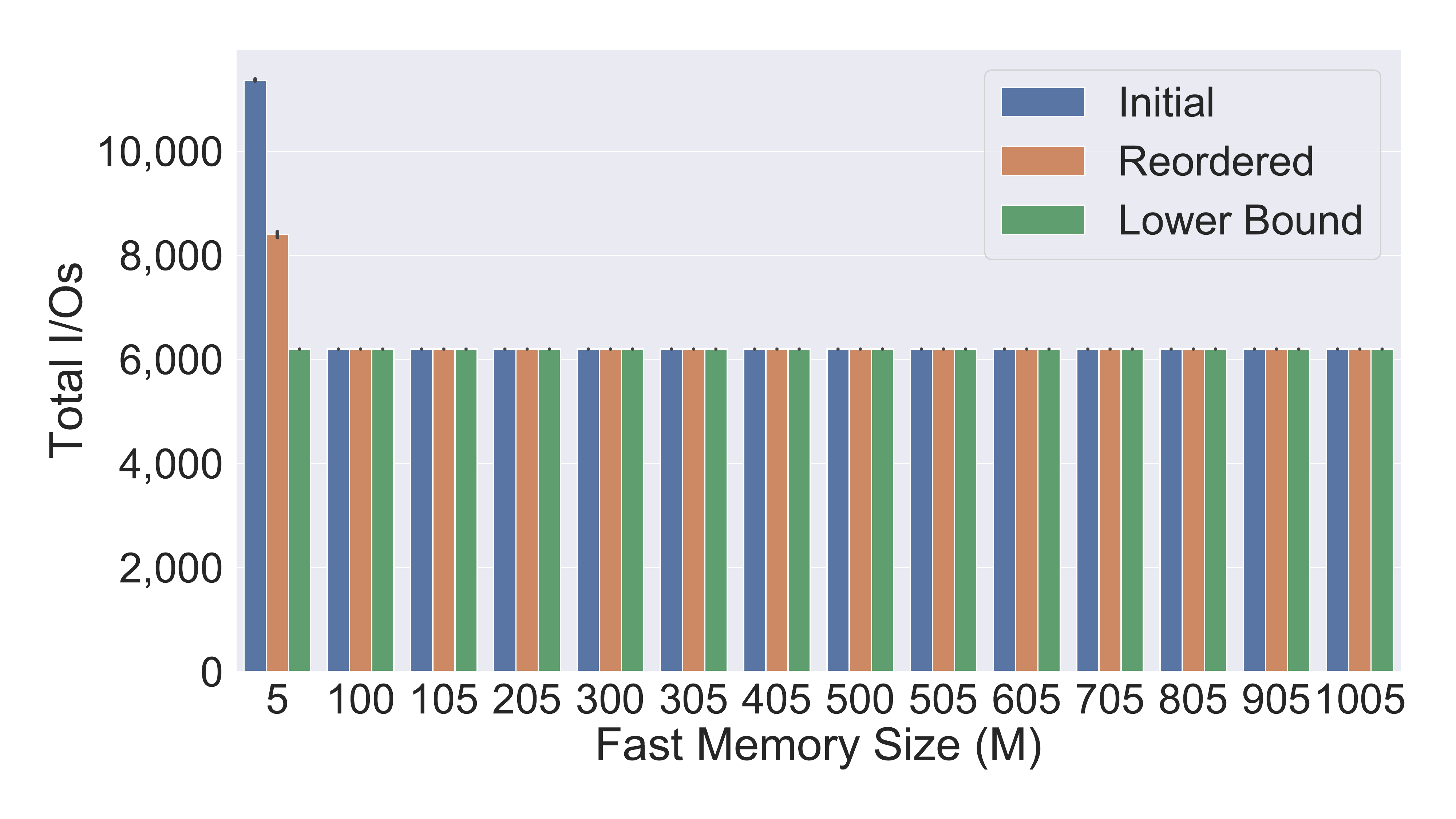}
    \vspace{-2em}
    \caption{$M_g=100$}
    \end{subfigure}
    \begin{subfigure}[b]{.8\linewidth}
    \centering
    \includegraphics[width=\linewidth]{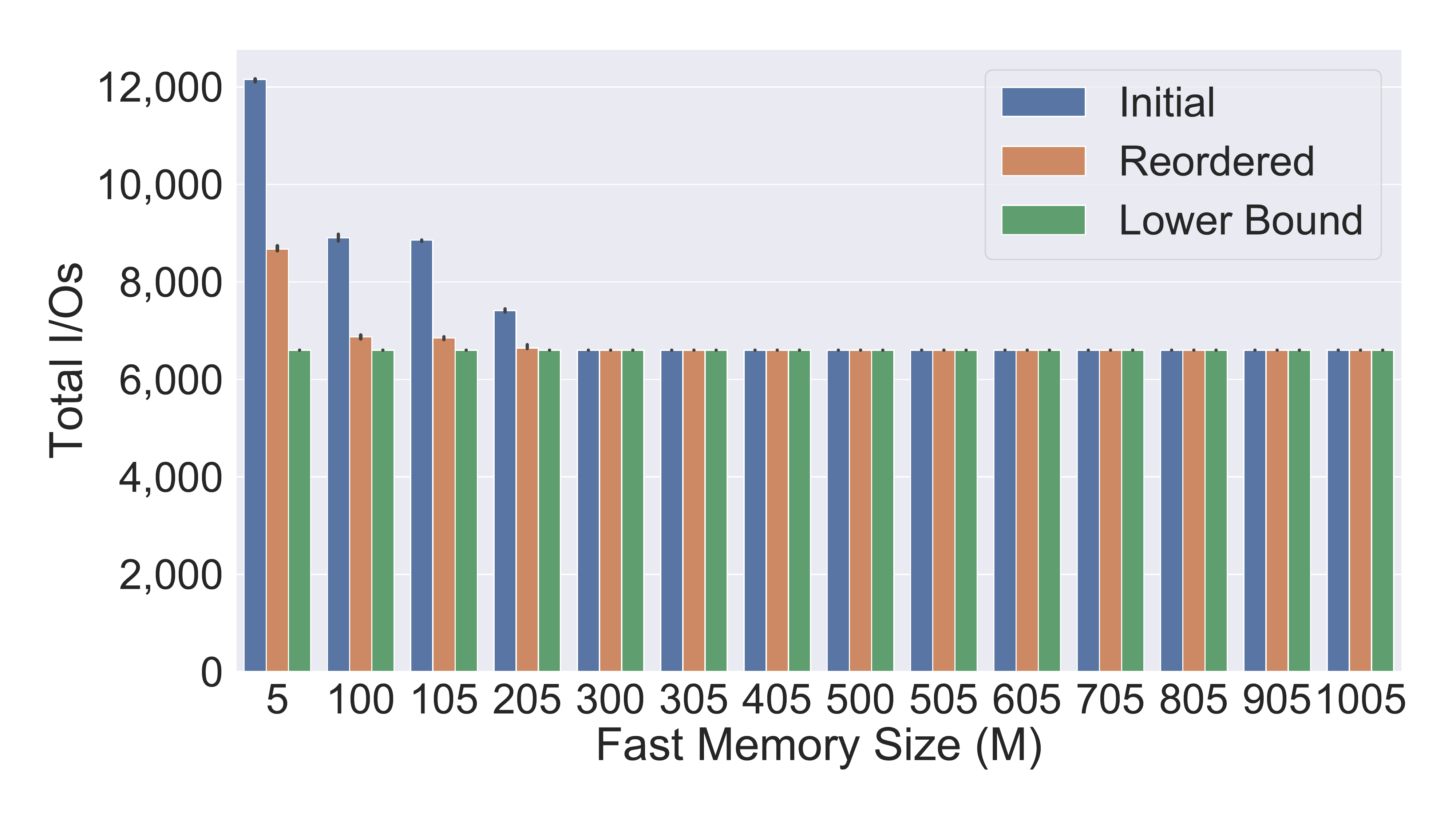}
    \vspace{-2em}
    \caption{$M_g=300$}
    \end{subfigure}
    \begin{subfigure}[b]{.8\linewidth}
    \centering
    \includegraphics[width=\linewidth]{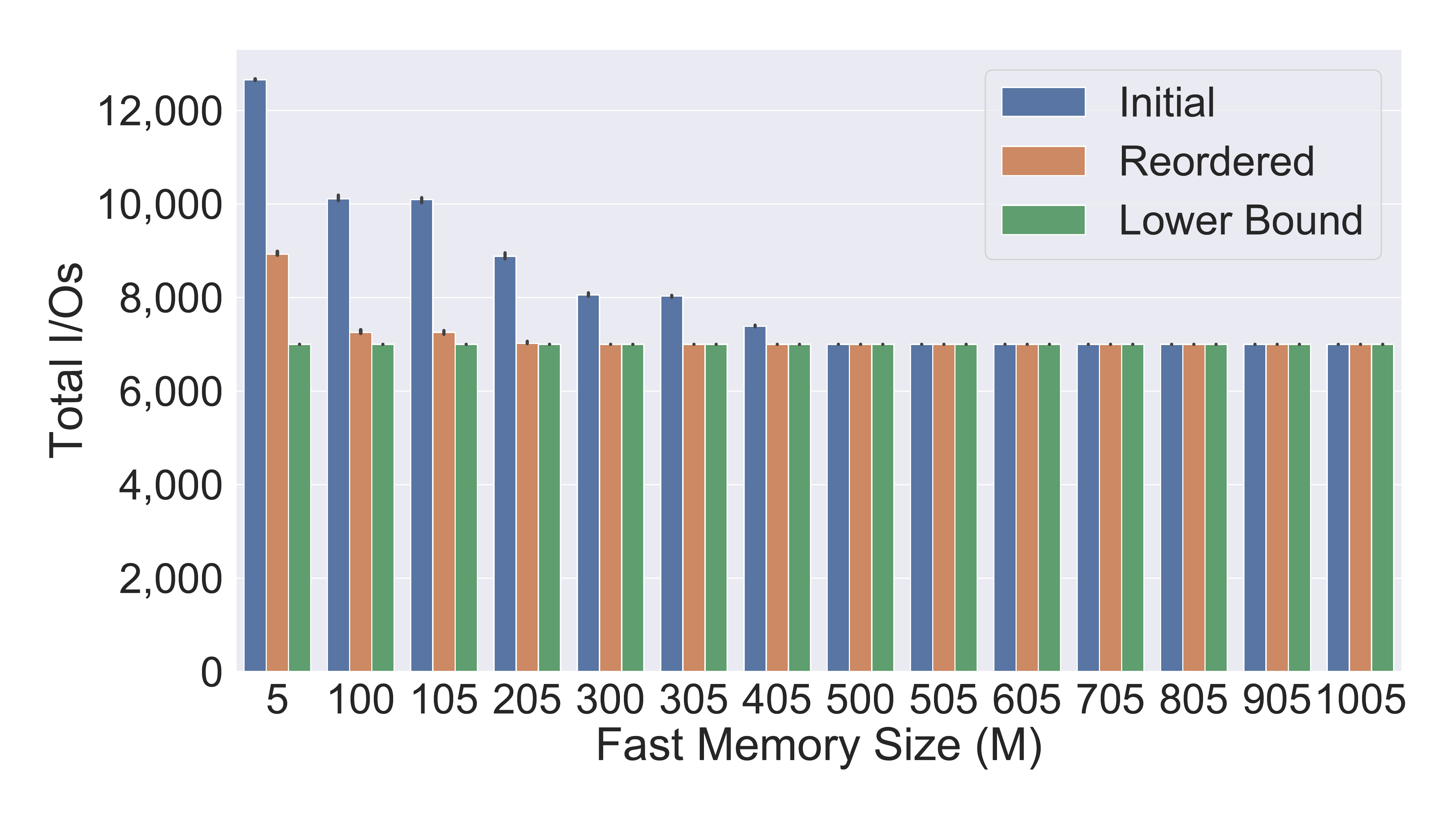}
    \vspace{-2em}
    \caption{$M_g=500$}
    \end{subfigure}
    \caption{Study of Compact Growth-generated FFNNs designed for different fast memory sizes $M_g$ ($1,000$ neurons with in-degree of $4$ and one output-neuron with in-degree $M_g$).}
    \label{fig:cgrowth-details}
\end{figure}

\subsubsection{Compact Growth}
We generate three FFNNs with Compact Growth as described in Appendix \ref{CGD}. These FFNNs correspond to the values $M_g=100,300$, and $500$. Then we count the I/Os used for these FFNNs with varying sizes of fast memory $M$. These numbers are shown in Figure \ref{fig:cgrowth-details}. As predicted by Theorem \ref{compactGrowth}, we can see for all three FFNNs that when $M\geq M_g$, we use a minimal number of I/Os. In these cases, we directly achieve this minimal number of I/Os using the connection order produced by CG, without applying CR. 

When we additionally apply CR, we are also able to use a minimal number of I/Os in some ranges of memory size $M$ with $M<M_g$. Of course this is not a contradiction to Theorem \ref{compactGrowth}: The theorem says that we can achieve a minimal number of I/Os using a memory size $M$ if and only if the FFNN can be built using $M_g=M$. But this does not imply that the same FFNN cannot be built with $M_g<M$.
\label{app:cg}

\subsubsection{Cache Eviction Policies}

\begin{figure}[h!]

    \centering

    \includegraphics[width=\linewidth]{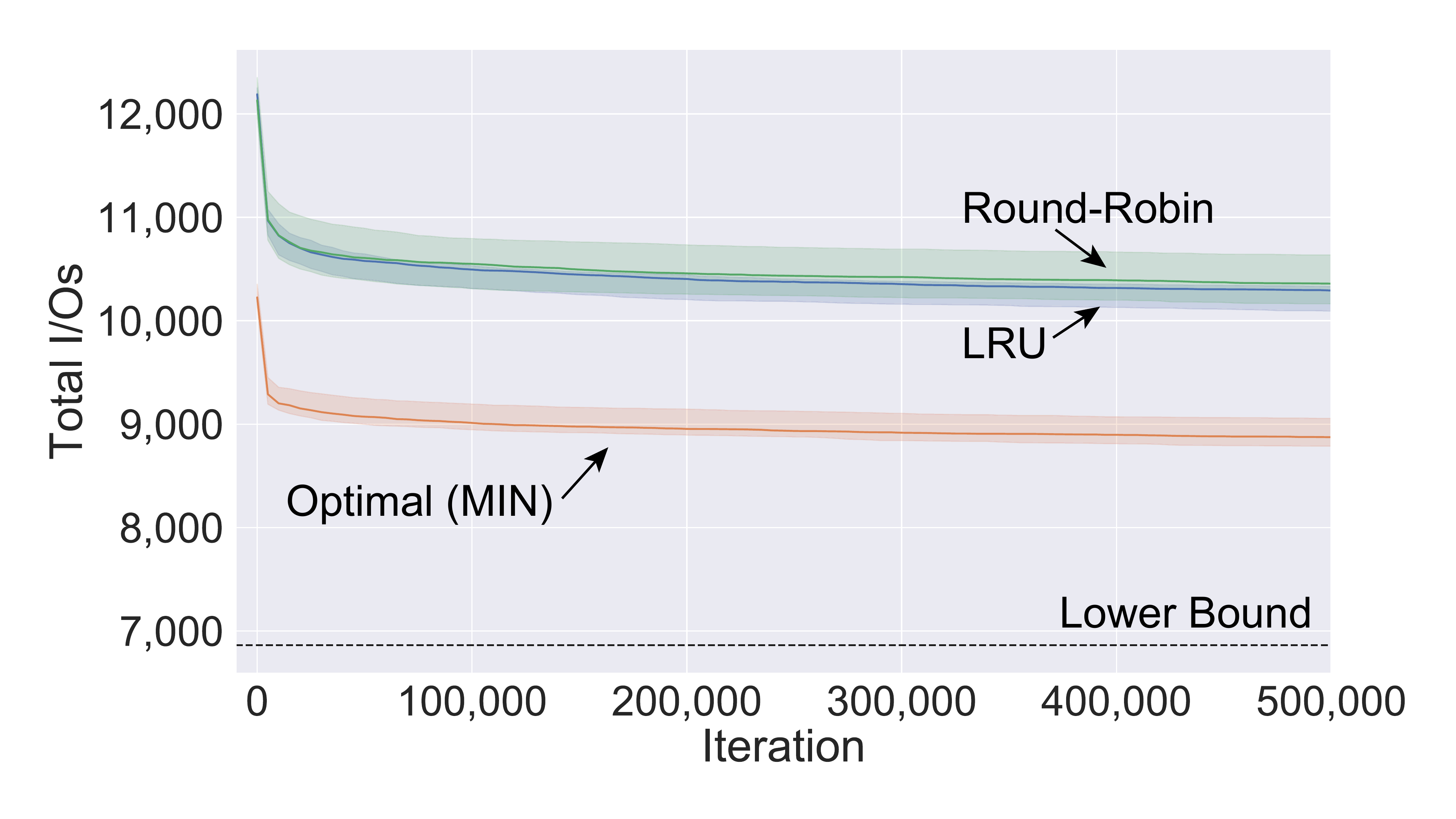}

    \caption{I/Os for different cache eviction policies.}

    \label{fig:policies}

\end{figure}

Figure~\ref{fig:policies} shows the I/O-complexity evolution (over Simulated Annealing iterations) of random FFNNs, for the RR, LRU, and MIN cache eviction policies. Owing to the update scheme, we observe a decaying convergence rate towards a minimum value, where the majority of I/O reduction happens in the first 10,000 iterations. It is interesting to observe that upon optimization, both RR and LRU converge to similar I/Os, which indicates that CR can tune FFNNs for hardware architectures with specific caching mechanisms.

\subsubsection{Size of Fast Memory}\label{app:fastmem}
In Figure~\ref{fig:fastmem}, we show how the number of I/Os changes as we increase the size of the fast memory, both before and after Connection Reordering (CR). As we can see, once the fast memory is sufficiently large, we use a number of I/Os that matches the theoretical lower bound provided by Theorem \ref{oinft}, with and without CR. But for insufficient memory size, we see that with CR we reduce I/Os and converge faster to the lower bound.
\begin{figure}[t]
    \centering
    \includegraphics[width=\linewidth]{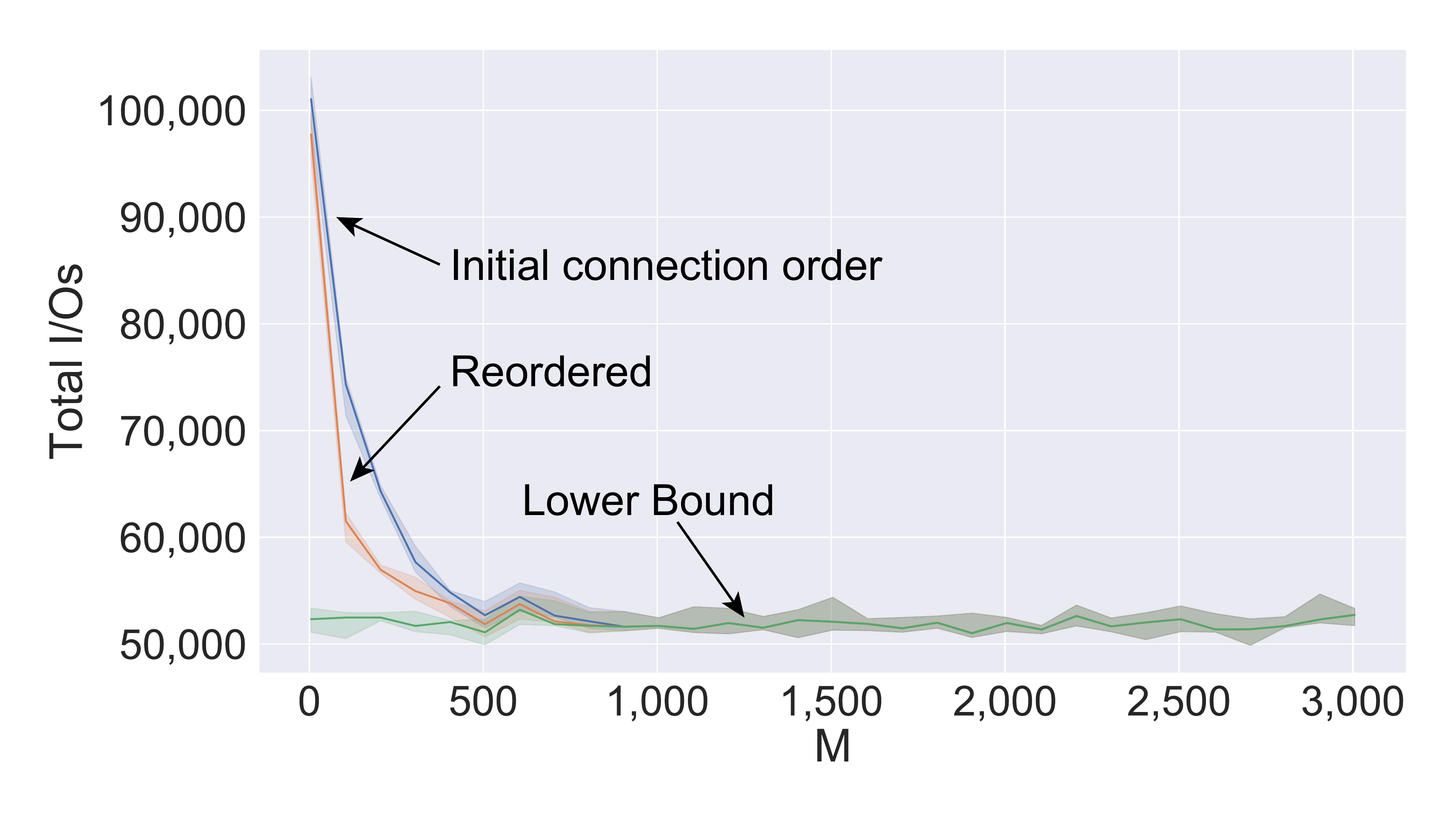}
    \caption{Study of different fast memory sizes ($M$) on random sparse FFNNs ($3$ layers of $500$ neurons each followed by one output neuron, $1$\% density).}
    \label{fig:fastmem}
\end{figure}

\subsubsection{Application to Transformer-model BERT}

\begin{figure}[t]
    \centering
    \begin{subfigure}[b]{0.95\linewidth}
    \centering
    \includegraphics[width=\linewidth]{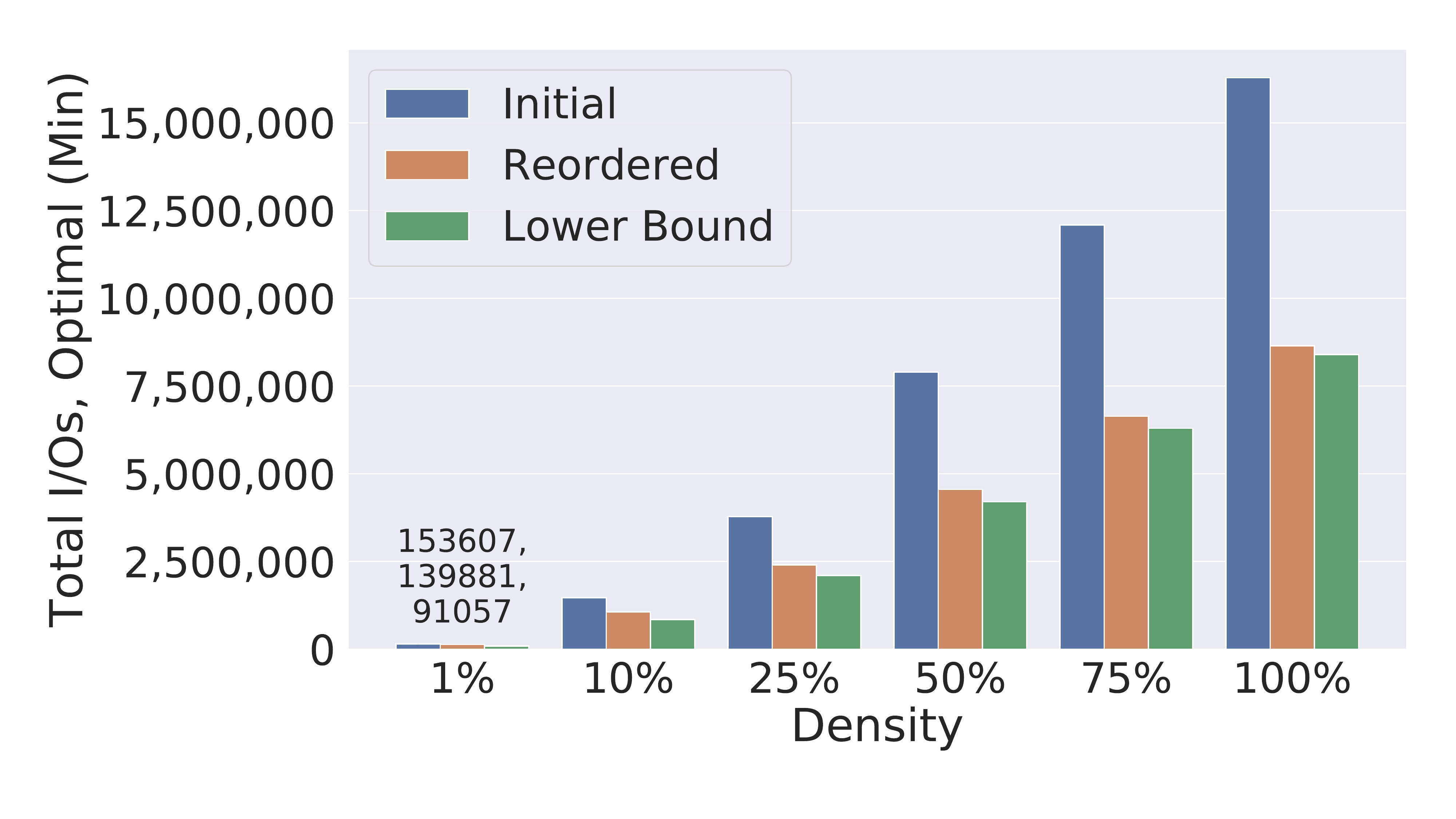}
    \vspace{-2em}
    \caption{Optimal (MIN)}
    \end{subfigure}
    \begin{subfigure}[b]{0.95\linewidth}
    \centering
    \includegraphics[width=\linewidth]{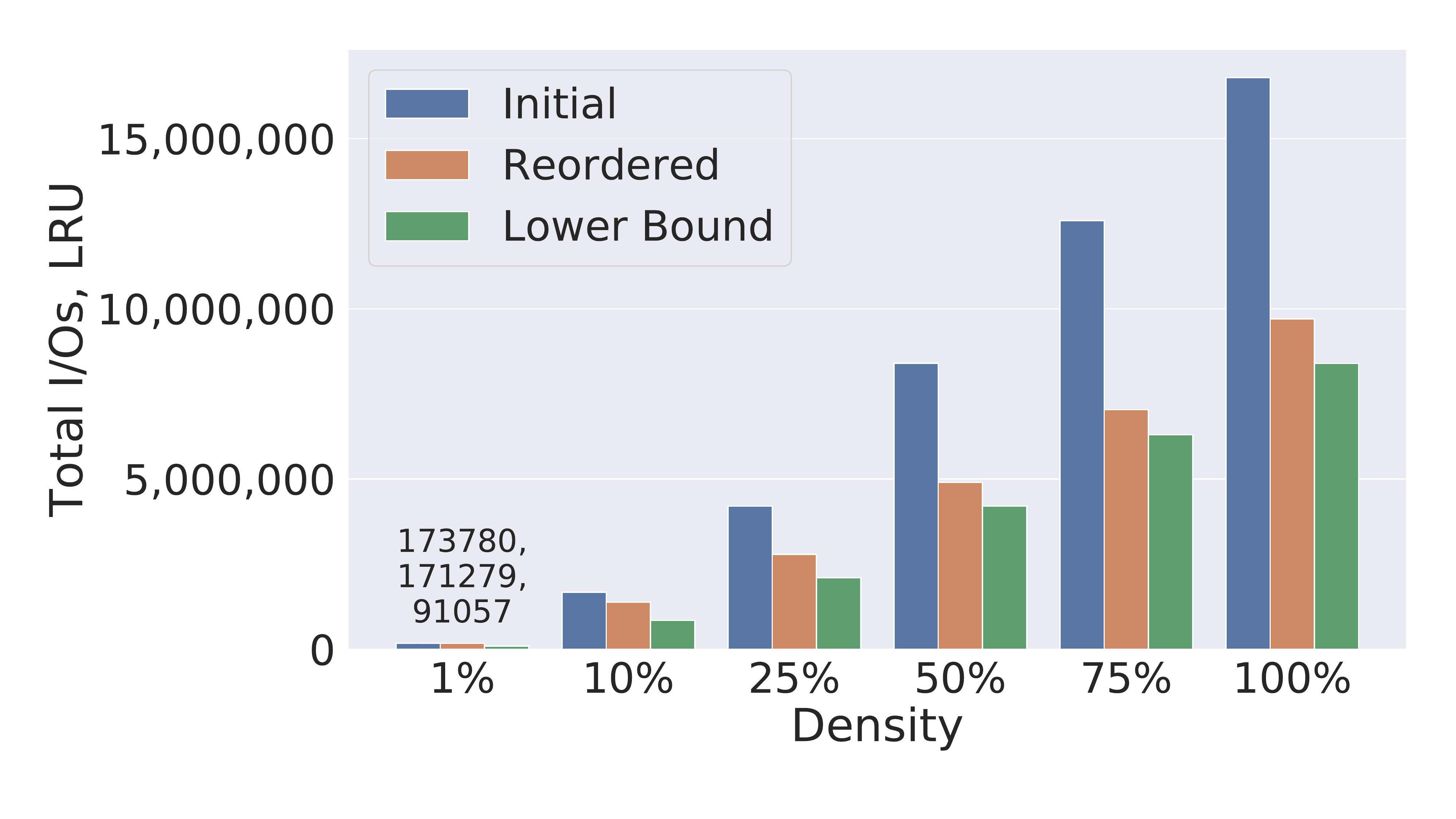}
    \vspace{-2em}
    \caption{LRU}
    \end{subfigure}
    \begin{subfigure}[b]{0.95\linewidth}
    \centering
    \includegraphics[width=\linewidth]{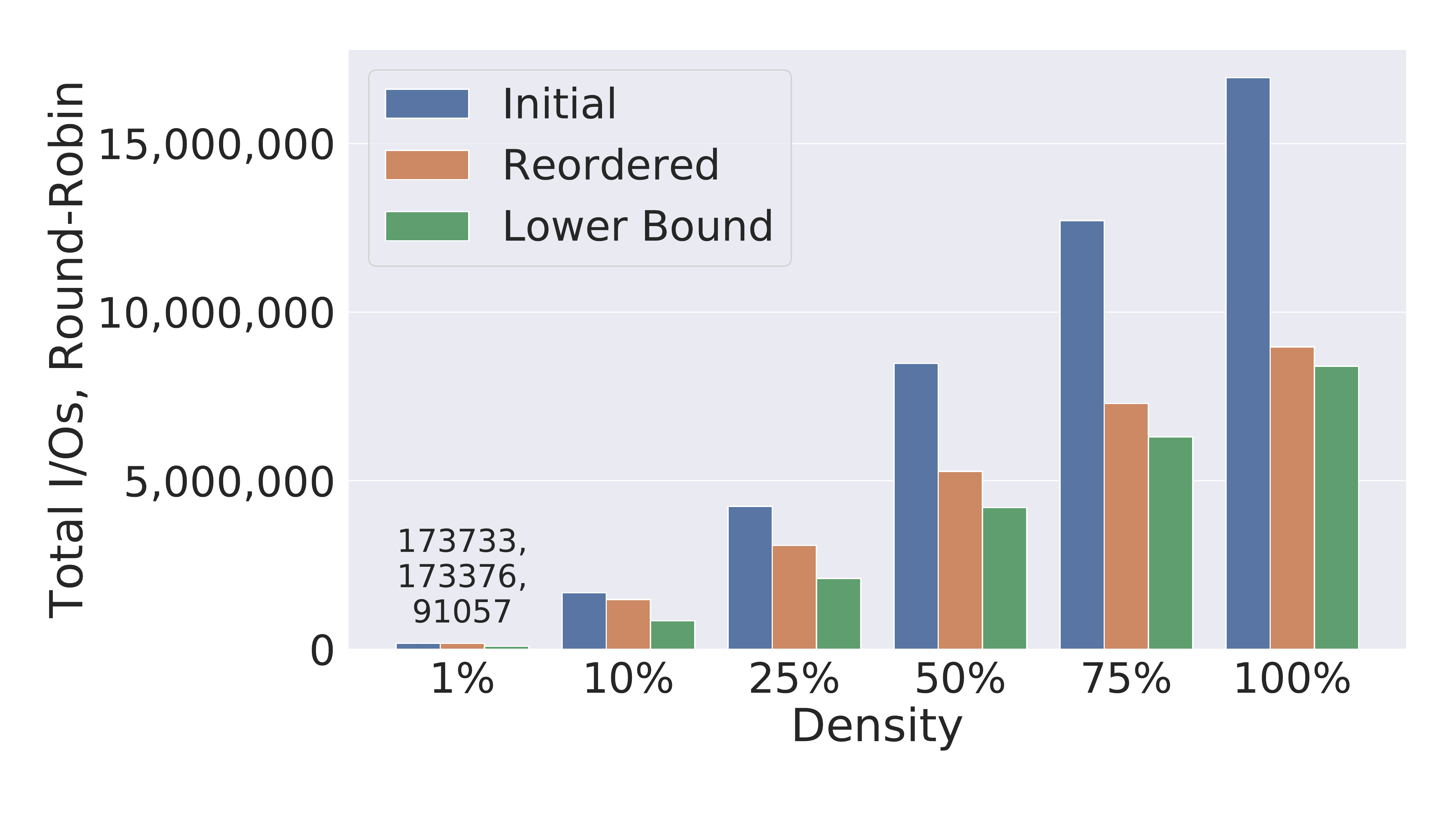}
    \vspace{-2em}
    \caption{Round-Robin}
    \end{subfigure}
    
    \caption{Connection Reordering for an MLP from BERT$_\text{\textsc{LARGE}}$ using different eviction policies, different edge densities and $M=100$. The lower bound is obtained from Theorem \ref{oinft}.}
    \label{BERT}
    \vspace{-1em}
\end{figure}

Transformers~\cite{attention} are a DNN architecture built from so-called \textit{encoders} and \textit{decoders}. A characteristic ingredient of transformers are \textit{attention heads}, which make them suitable for natural language processing and related tasks. A prominent example of a Transformer is BERT~\cite{bert}, which was introduced in 2019. BERT's groundbreaking results made Transformers arguably the most popular and extensively investigated DNN architecture of the past years. The main challenge encountered when deploying transformers is the large size of these models and the resulting demand for compute resources~\cite{ivanov2020data}.
BERT includes several FFNNs of depth $2$ and weight matrices of dimensions $1024\times 4096$ and $4096\times 1024$. The major part of BERTs parameters and compute time for inference come from these FFNNs~\cite{ivanov2020data}. This makes them a prime target for pruning. We took one of these FFNNs and pruned it by removing the connections with the weights of smallest absolute value. Figure~\ref{BERT} shows the I/O-counts and the I/O lower bounds for inference on this FFNN. We considered different levels of sparsity and different cache eviction policies. Further, we assumed a fast memory of size $100$.

\begin{figure}[t]
    \centering
    \begin{subfigure}[b]{0.9\linewidth}
    \centering
    \includegraphics[width=\linewidth]{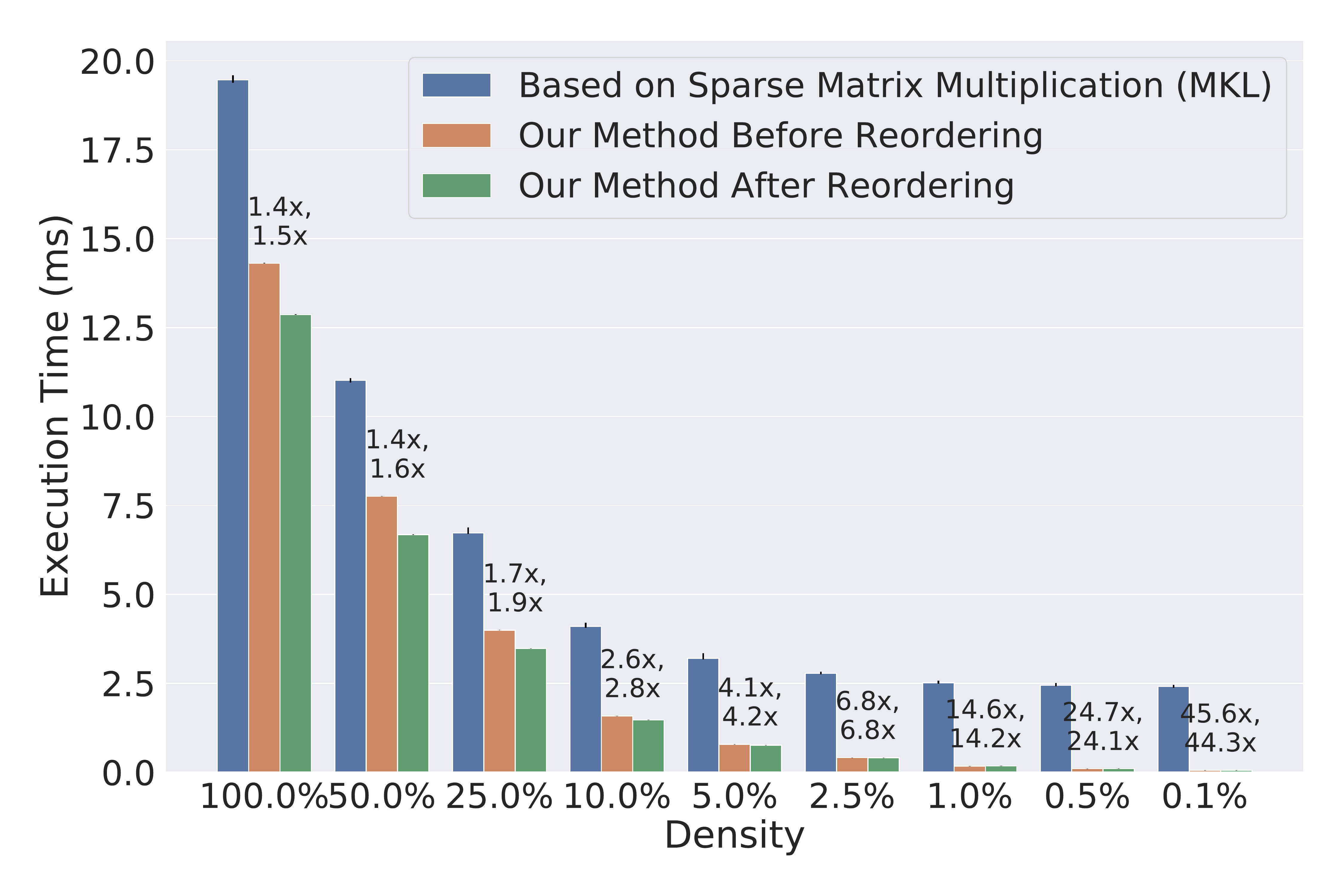}
    \vspace{-2em}
    \caption{Connection Density}
    \label{fig:realdensity}
    \end{subfigure}
    \begin{subfigure}[b]{0.9\linewidth}
    \centering
    \includegraphics[width=\linewidth]{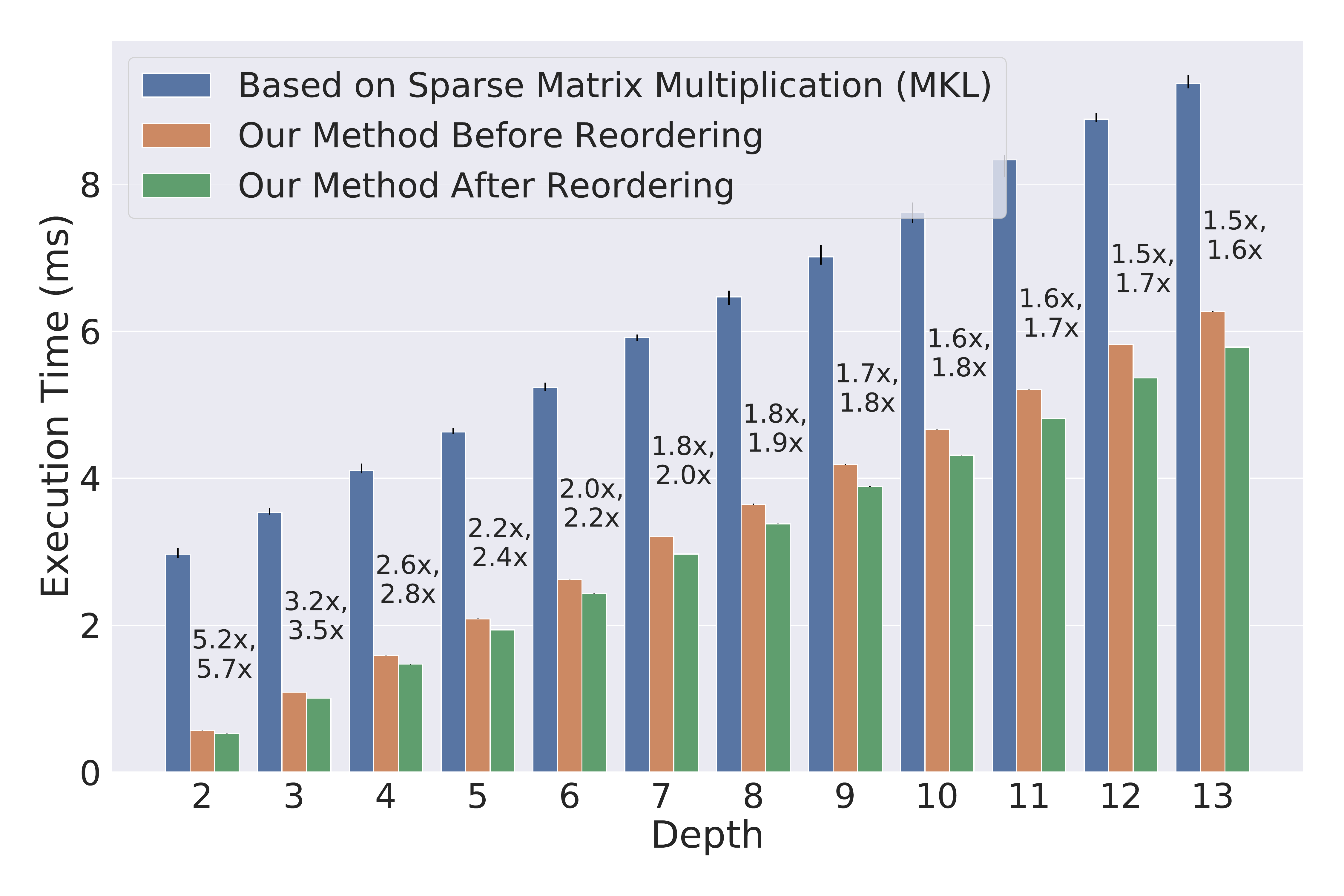}
    \vspace{-2em}
    \caption{Depth}
    \label{fig:realdepth}
    \end{subfigure}
    \begin{subfigure}[b]{0.9\linewidth}
    \includegraphics[width=\linewidth]{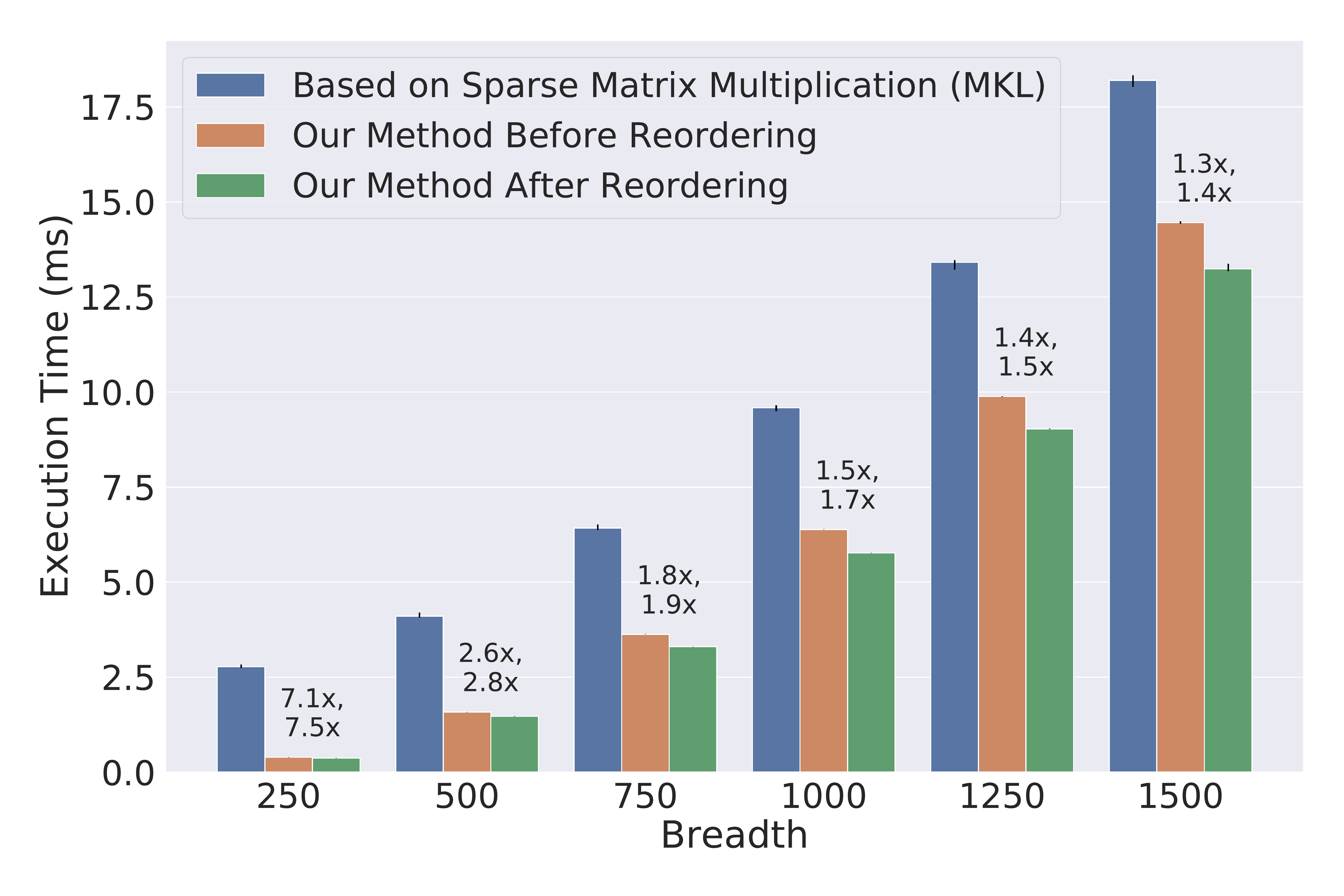}
    \vspace{-2em}
    \caption{Breadth}
    \label{fig:realbreadth}
    \end{subfigure}
    \caption{Execution time for randomly-sparse FFNNs with different methods.}
\end{figure}

\subsection{Performance Experiments}

To validate the performance benefits of I/O reduction, we perform batched inference on an Intel CPU and measure the performance. For all experiments, we reorder the FFNNs with the simulated optimal (MIN) eviction policy, and then run the FFNN before and after reordering, as well as in the traditional, layer-based approach using MKL for sparse-dense matrix matrix multiplication (CSRMM). Using batched inference (as is performed in production environments) enables the use of SIMD instructions and to better saturate the memory bandwidth. We run each experiment 10 times, with a batch size of 128, reporting median values and error bars the shortest and longest execution time. Annotations show the speed-ups that we obtain with our methods (without and with reordering) relative to the layer-based inference approach. 

The hardware used for the experiments is a 32-core Intel Xeon Gold 6130 CPU (running at 2.10 GHz) with 1.5 TB RAM. For software, we use Intel oneAPI MKL 2021.3.0 for CSRMM, and GCC 10.2.0 for our reordered networks.   

\subsubsection{Random sparse FFNNs}
Starting with the same baseline FFNN from the simulated experiments (a 10\% dense, 4-layer MLP with 500 neurons in each layer and one output neuron) we measure the execution time of our methods varying density, width, or depth while keeping the other parameters constant at their baseline value.

In Figure~\ref{fig:realdensity} we show the execution time for varying degrees of density. As we can see, the sparser the FFNN, the larger the speedup that we obtain with our method relative to the standard, layer-wise way of performing inference. For example, at density 0.1\% we get a speed-up of about 45x. On the other hand, the more sparse the FFNN, the less additional speedup we gain by reordering. In fact, somewhat surprisingly, for density values below 1\% the execution time is slightly lower without than with reordering, which we attribute to two causes: (a) the relatively low reduction in I/Os (see Section~\ref{Simux}); and (b) the small size of the networks, which makes the performance benefit negligible due to the FFNNs fitting in L2 caches. 

Another effect shown in Figure~\ref{fig:realdensity} is speedup compared with MKL on 100\% dense FFNNs. The reason for this phenomenon is that we use CSRMM in that case for consistency in measurements, rather than a dense matrix- dense matrix multiplication (GEMM), which explains the observed slowdown.

In Figures~\ref{fig:realdepth}--\ref{fig:realbreadth} we show the execution time as a function of the depth (Figure~\ref{fig:realdepth}) and breadth (Figure~\ref{fig:realbreadth}) of the FFNN (while keeping the other structural parameters constant at their baseline values). In both cases we observe speedups that grow linearly with depth and breadth (and hence, with the size of the instance), however, with diminishing returns as the network grows in one dimension and not in the others. 

In sum, our MLP performance experiments align with the simulation results, indicating that reordering is beneficial in practice for sparse neural networks, especially for cases where density is low.

\subsubsection{BERT}
In Figure~\ref{fig:realbert} we show the execution times for inference on the previously introduced encoder MLP from BERT$_\text{\textsc{LARGE}}$ with varying degrees of density.
In the case of MKL inference at 10\% density we had one clear outlier (as defined by Tukey's method~\cite{benchm}; one run took 106 ms while in the other 9 runs it took about 17 ms) which we removed from the data shown in this figure.
The figure shows that runtime after reordering is always lower than before reordering, and confirms that even for realistic network modules with only two layers, there is a benefit in using our method, especially for low densities.

\begin{figure}[t]
    \centering
    \includegraphics[width=\linewidth]{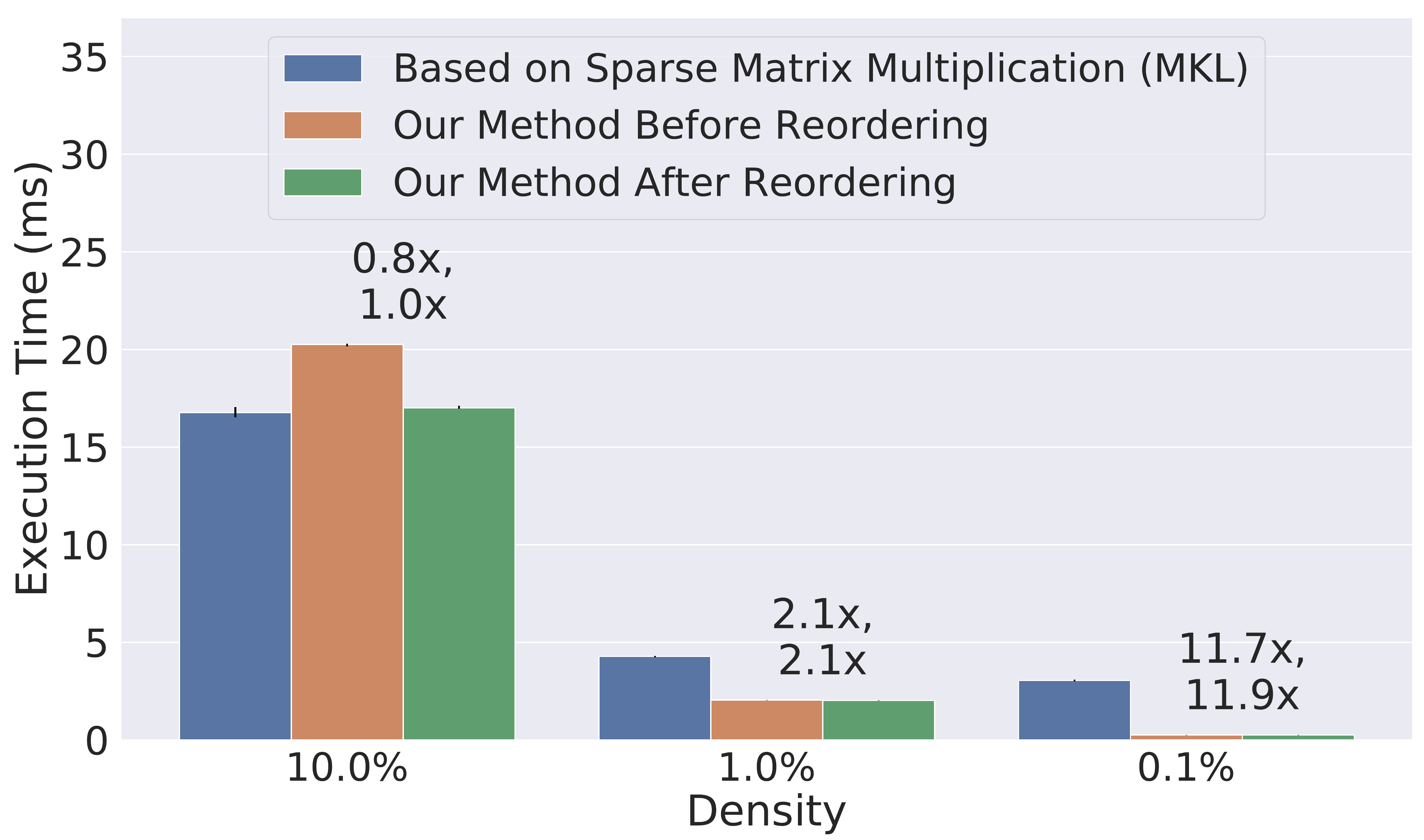}
    \vspace{-1em}
    \caption{Execution time for BERT$_\text{\textsc{LARGE}}$ with Connection Reordering and varying densities.}
    \label{fig:realbert}
\end{figure}

\section{Discussion}
\subsection{Prefetching}
Modern systems support prefetching to accelerate I/Os which is not considered in the presented model. Notice however that the effect of prefetching is ``orthogonal'' to the optimizations considered in this paper as prefetching does not reduce the overall amount of I/Os but it only affects the time the I/Os take.

\subsection{Cost of using a given topological order}
The computations of ``good'' (that is, I/O-efficient) topological orders are performed ``offline'' and once the order is determined and we perform inference, we do not need to look at it explicitly again as it is encoded in the way the connections are laid out. Hence, during inference there is no additional cost associated with processing the connections according to any given topological order.
\subsection{The term ``I/O''}
The theory in the paper applies not only to caches, but local memory of any kind (including, e.g., registers). I/O refers to a load/store from a "slow" memory, which could be represented by off-chip memory, communicating with another node, or even scratch-pad memory. The work does neither assume nor rely on caches, and covers aspects such as how to stream memory efficiently into a chip.
\section{Conclusions}

We established tight bounds on the I/O-complexity of FFNN inference. We present a 2-optimal computation strategy and show how combinatorial optimizations on the order of the connections can lead to further improvements in I/Os. 
Given a fixed memory size, Compact Growth allows us to create FFNNs on which inference can be performed I/O-optimally: we do not need to read nor write any temporary values.
Our work provides theoretical insight into the problem of mapping existing neural networks to hardware, such that energy-consumption and latency during deployment are minimized. Most importantly, these insights indicate that the improvements are achieved only when \emph{dispensing with the straightforward, layer-by-layer manner of processing DNNs}. In turn, our proposed Connection Reordering overcomes this misconception, reducing the I/O costs significantly. In our experiments on real hardware we observe speedups of up to 45$\times$ relative to the standard (CSRMM-based) way of performing inference. Further, between our 2-optimal computation strategy and the strategy that is further improved by connection reordering, we observe speedups of up to 1.17$\times$. 

\section*{Acknowledgements}
This work was supported by the EU Horizon project GLACIATION under grant agreement No. 101070141. 
T.B.N. is supported by the Swiss National Science Foundation (Ambizione Project \#185778).
We also acknowledge the Swiss National Supercomputing Centre (CSCS) for support and access to computational resources.

\bibliographystyle{unsrt}
\bibliography{references}

\appendix

\section{Additional details on experimental setup and results}
\subsection{How we generate random FFNNs}

We describe how we generate random FFNNs of width $w$, depth $d$, and sparsity $p$:
For each non-output neuron, we determine how many outgoing connections it has, by drawing uniformly at random an integer $k$ between $1$ and $max(1,\lceil 2\cdot p\cdot\# \text{of neurons in the next layer}-1\rceil)$. Then, we connect this neuron to $k$ randomly chosen neurons of the next layer ($k\geq 1$ ensures that the FFNN is connected and the output neuron is connected to all neurons of the last hidden layer).
Once we determined all connections, we order the connections layer-by-layer with respect to their output neuron (according to the proof of Theorem \ref{oinft}, this ensures that we have a connection layout that uses not more than twice the optimal number of I/Os).

\subsection{How we generate FFNNs with Compact Growth}\label{CGD}
We start with $M_g-2$ readily computed input neurons in the bag (in our experiments we used $M_g=100,300,$ and $500$). Then we iterate from $1$ to $1000$, and in each iteration: 1) we add a new neuron 2) we draw incoming connections to this neuron from $5$ other randomly chosen neurons from the bag 3) we remove the last of these $5$ randomly chosen neurons from the bag. After the $1000$ iterations, we add one more neuron (the output neuron) and draw connections from all remaining neurons in the bag to this output neuron.

\end{document}